\newtheorem{lemma}{Lemma}
\newtheorem{result}{Result}
\newtheorem{theorem}[lemma]{Theorem}
\newtheorem{corollary}[lemma]{Corollary}
\newcommand{\cS}{\mathcal{S}}
\newcommand{\cL}{\mathcal{L}}
\newcommand{\cP}{\mathcal{P}}
\newcommand{\cH}{\mathcal{H}}
\newcommand{\cE}{\mathcal{E}}
\newcommand{\cD}{\mathcal{D}}
\newcommand{\cN}{\mathcal{N}}
\newcommand{\cM}{\mathcal{M}}
\newcommand{\cZ}{\mathcal{Z}}  
\newcommand{\proj}[1]{\left|#1\middle\rangle\!\middle\langle #1\right|}
\newcommand{\ket}[1]{\left|#1\right\rangle}
\newcommand{\eps}{\varepsilon}
\DeclareMathOperator{\tr}{tr}
\begin{document}

\title{Quantum Coding with Finite Resources}

\author{Marco Tomamichel}
\affiliation{School of Physics, University of Sydney, Sydney, NSW 2006, Australia}
\author{Mario Berta}
\affiliation{Institute for Quantum Information and Matter, Caltech, Pasadena, CA 91125, USA}
\author{Joseph M.~Renes}
\affiliation{Institute for Theoretical Physics, ETH Zurich, 8093 Z\"urich, Switzerland}

\begin{abstract}
The quantum capacity of a memoryless channel is often used as a single figure of merit to characterize its ability to transmit quantum information coherently. The capacity determines the maximal rate at which we can code reliably over asymptotically many uses of the channel. We argue that this asymptotic treatment is insufficient to the point of being irrelevant in the quantum setting where decoherence severely limits our ability to manipulate large quantum systems in the encoder and decoder. For all practical purposes we should instead focus on the trade-off between three parameters: the rate of the code, the number of coherent uses of the channel, and the fidelity of the transmission. The aim is then to specify the region determined by allowed combinations of these parameters. 

Towards this goal, we find approximate and exact characterizations of the region of allowed triplets for the qubit dephasing channel and for the erasure channel with classical post-processing assistance. In each case the region is parametrized by a second channel parameter, the quantum channel dispersion.
In the process we also develop several general inner (achievable) and outer (converse) bounds on the coding region that are valid for all finite-dimensional quantum channels and can be computed efficiently. Applied to the depolarizing channel, this allows
us to determine a lower bound on the number of coherent uses of the channel necessary to witness super-additivity of the coherent information.
\end{abstract}

\maketitle


\section{Introduction}\label{sec:intro}

One of the quintessential topics in quantum information theory is the study of reliable quantum information transmission over a noisy quantum channel. Here the word ``channel'' simply refers to a description of a physical evolution (by means of a completely positive trace-preserving map on density operators). Traditionally one considers point-to-point communication settings where a memoryless channel can be used many times in sequence. The sender (often called Alice) first encodes a quantum state into a sequence of registers and then sends them one by one trough the channel to the receiver (often called Bob). Bob collects these registers and then attempts to decode the quantum state. Alternatively, consider a collection of physical qubits that are exposed to independent noise. The goal is then to encode quantum information (logical qubits) into this system so that the quantum information can be decoded with high fidelity at a later stage. 
One of the primary goals of information theory is to find fundamental limits imposed on any coding scheme that tries to accomplish such tasks.

Following a tradition going back to Shannon's groundbreaking work~\cite{shannon48}, this problem is usually studied asymptotically: the \emph{quantum capacity} of a channel is defined as the \emph{optimal rate} (in qubits per use of the channel) at which we can transmit quantum information \emph{with vanishing error} as the number of channel uses \emph{goes to infinity}. In the context of information storage, the rate simply corresponds to the ratio of logical to physical qubits, and the number of channel uses corresponds to the number of physical qubits. The quantum capacity of arbitrary channels has been determined in a series of works, an upper bound to the capacity shown in~\cite{schumacher_quantum_1996,barnum_information_1998,barnum00} and achievability of that bound shown in~\cite{lloyd97,shor02,devetak05b}. 

However, in any application of the theory resources are \emph{finite} and the number of channel uses is necessarily limited. More importantly, at least for the near future it appears unrealistic to expect that encoding and decoding circuits can coherently manipulate large numbers of qubits. Restricting the size of the quantum devices used for encoding the channel inputs and decoding its outputs is tantamount to considering communication with only a fixed number of channel uses. This then raises the question whether an asymptotic approach where this number goes to infinity\,---\,which has proven to be very successful for the analysis of classical communication systems\,---\,is equally suitable for the quantum setting.
Clearly, what we really want to understand is how well we can transmit quantum information in a realistic setting where the number of channel uses and the size of quantum devices is limited.
The quantum capacity is at most a proxy for the answer to this question, and in this article we argue that it is often not a very good one. 

The study of such non-asymptotic scenarios has recently garnered significant attention in classical information theory~\cite{polyanskiy10,hayashi09,tanbook14} as well as in quantum information theory~\cite{tomamichel12,li12,tomamicheltan14,datta14}. Here we extend these considerations to the setting of quantum communication.


\paragraph*{Outline:}

The remainder of the paper is structured as follows. In Section~\ref{sec:results} we discuss our main results detail. In Section~\ref{sec:notation} we formally introduce relevant notation and information measures. In Section~\ref{sec:converse} we derive our converse (outer) bounds and in Section~\ref{sec:achieve} we derive our achievability (inner) bounds. Finally, Section~\ref{sec:examples} discusses the specific examples presented below as Results~\ref{res:z},~\ref{res:e}, and~\ref{res:depol}.


\section{Discussion of Results}\label{sec:results}

\begin{figure}
\begin{overpic}[width=.5\textwidth]{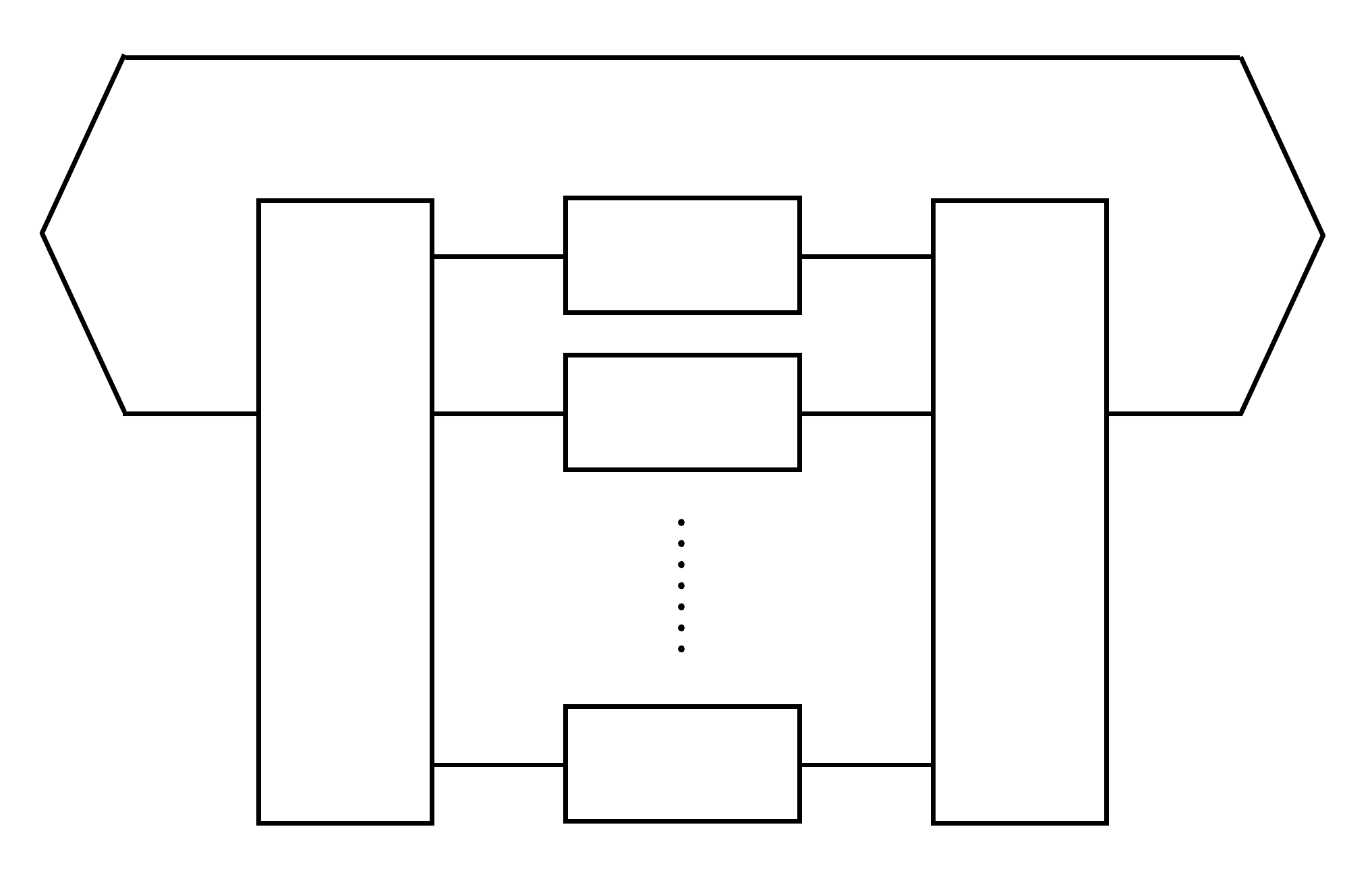}
  \put(24,26){\footnotesize$\cE$}
  \put(73,26){\footnotesize$\cD$}
  \put(48,8){\footnotesize$\cN$}
  \put(48,33){\footnotesize$\cN$}
  \put(48,45){\footnotesize$\cN$}
  \put(-9,48){\footnotesize${\phi_{MM'}}$}
  \put(98,48){\footnotesize${\rho_{MM''}}$}
  \put(11,36){\footnotesize$M'$}
  \put(83,36){\footnotesize$M''$}
  \put(11,56.5){\footnotesize$M$}
  \put(34.5,10.5){\footnotesize$A_n$}
  \put(34.5,36){\footnotesize$A_2$}
  \put(34.5,47.5){\footnotesize$A_1$}
  \put(60,10.5){\footnotesize$B_n$}
  \put(60,36){\footnotesize$B_2$}
  \put(60,47.5){\footnotesize$B_1$}
\end{overpic}
  \caption{Coding Scheme for entanglement transmission over $n$ uses of a channel $\cN \equiv \cN_{A\to B}$. The systems $M$, $M'$ and $M''$ are isomorphic. The encoder $\cE \equiv \cE_{M' \to A^n}$ encodes the part $M'$ of the maximally entangled state $\phi_{MM'}$ into the channel input systems. Later, the decoder $\cD \equiv \cD_{B^n \to M''}$ recovers the state from the channel output systems.
  The performance of the code is measured using the fidelity $F(\phi_{MM'}, \rho_{MM''})$.}
  \label{fig:scheme}
\end{figure}

In this work we focus on codes enabling a state entangled with a reference system to be reliably transmitted through the channel. This is a strong requirement: reliable entanglement transmission implies reliable transmission, on average, of all non-entangled input states. The coding scheme is depicted in Figure~\ref{fig:scheme}. We are given a quantum channel $\cN \equiv \cN_{A\to B}$ and denote by $\cN^{\otimes n}$ the $n$-fold parallel repetition of this channel. An \emph{entanglement transmission code} for $\cN^{\otimes n}$ is given by a triplet $\{ |M|, \cE, \cD\}$, where $|M|$ is the local dimension of a maximally entangled state $\ket{\phi}_{MM'}$ that is to be transmitted over $\cN^{\otimes n}$. The quantum channels $\cE \equiv \cE_{M' \to A^n}$ and $\cD \equiv \cD_{B^n \to M''}$ are encoding and decoding operations, respectively. (A more formal treatment will follow in Section~\ref{sec:notation-codes}.) With this in hand, we now say that a triplet $\{R, n, \eps\}$ is \emph{achievable} on the channel $\cN$ if there exists an entanglement transmission code satisfying
\begin{align}\label{eq:coding_def}
    \frac{1}{n} \log |M| \geq R  \qquad \textrm{and} \qquad F\Big( \phi_{MM'},\, (\cD \circ \cN^{\otimes n} \circ\cE)(\phi_{MM'}) \Big) \geq 1-\eps\,.
\end{align}
Here, $R$ is the \emph{rate} of the code, $n$ is the number of channel uses, and $\eps$ is the tolerated error measured in terms of the fidelity $F$. 

The non-asymptotic \emph{achievable region} of a quantum channel $\cN$ is then given by the union of all achievable triplets $\{ R, n, \eps\}$. The goal of (non-asymptotic) information theory is to find tight bounds on this achievable region, in particular to determine if certain triplets are outside the achievable region and thus \emph{forbidden}. For this purpose, we define its boundary 
\begin{align}\label{eq:optrate}
\hat{R}_{\cN}(n; \eps) := \max \big\{ R : (R,n,\eps) \textrm{ is achievable on } \cN \big\}\,,
\end{align}
and investigate it as a function of $n$ for a fixed value of $\eps$.\footnote{An alternative approach would be to investigate the boundary $\hat{\eps}_{\cN}(n; R) := \max \{ \eps : (R, n, \eps) \textrm{ is achievable} \}$. This leads to the study of error exponents (and the reliability function) as well as strong converse exponents. We will not discuss this here since such an analysis usually does not yield sufficiently tight bounds for small values of $n$.} We will often drop the subscript $\cN$ if it is clear which channel is considered.

To begin, let us rephrase the seminal capacity results~\cite{schumacher_quantum_1996,barnum_information_1998,barnum00,lloyd97,shor02,devetak05} in this language. The quantum capacity is defined as the asymptotic limit of $\hat{R}_{\cN}(n;\eps)$ when $n$ (first) goes to infinity and $\eps$ vanishes. 
The capacity can be expressed in terms of a regularized coherent information:
\begin{align}\label{eq:cap}
   Q(\cN):= \lim_{\eps \to 0} \lim_{n \to \infty} \hat{R}_{\cN}(n;\eps)= \lim_{\ell \to \infty} \frac{I_{\textrm{c}}\big(\cN^{\otimes \ell}\big)}{\ell}\,,
\end{align}
where the coherent information $I_{\textrm{c}}$ will be defined in Section~\ref{sec:achieve}. This result is highly unsatisfactory, not least because the regularization makes its computation intractable.\footnote{It is not clear if the limit $\ell \to \infty$ is necessary for any fixed channel, but it was recently shown that there does not exist a universal constant $\ell_0$ such that $C(\cN) \leq \frac{1}{\ell_0} I_{\textrm{c}}(\cN^{\ell_0})$ for all channels $\cN$~\cite{cubitt14}.} Worse, the statement is not as strong as we would like it to be because it does not give any indication of the fundamental limits for finite $\eps$ or finite $n$.

For example, even sticking to the asymptotic limit for now, we might be willing to admit a small but nonzero error in our recovery. Formally, instead of requiring that the error vanishes asymptotically, we only require that
it does not exceed a certain threshold, $\eps$.
Can we then achieve a higher asymptotic rate in the above sense? Surprisingly, the answer to this question is not known in general. Recent work~\cite{tomamichelww14} at least settles the question in the negative for a class of generalized dephasing channels and in particular for the qubit dephasing channel
\begin{align}
\cZ_{\gamma}: \rho \mapsto (1-\gamma)\rho + \gamma Z \rho Z \,,
\end{align}
where $\gamma \in [0,1]$ is a parameter and $Z$ is the Pauli $Z$ operator. Dephasing channels are particularly interesting examples because dephasing noise is dominant in many physical implementations of qubits.
The results of~\cite{tomamichelww14} thus allow us to fully characterize the achievable region in the limit $n \to \infty$ for such channels, and in particular ensure that
\begin{align}
  \lim_{n \to \infty} \hat{R}_{\mathcal{Z}_{\gamma}}(n;\eps) = I_{c}(\mathcal{Z}_{\gamma})\,,
\end{align}
independent of the value of $\eps \in (0,1)$. Note also that the regularization is not required here since these channels are degradable~\cite{devetakshor05}. 

Here we go beyond studying the problem in the asymptotic limit and develop characterizations of the achievable region for finite values of $n$. We find inner (achievability) and outer (converse) bounds on the boundary of the achievable region. These do not agree for general channels (which is unsurprising given the fact that such an agreement has not even been established asymptotically for nonzero error), but they do coincide for certain important examples.

\renewcommand{\subfigbottomskip}{5pt}
\renewcommand{\subfigcapskip}{12pt}
\renewcommand{\subfiglabelskip}{2pt}

\begin{figure}
\subfigure[Boundary of the achievable region for different values of $n$ (second order approximation).]{
\begin{overpic}[width=0.52\textwidth]{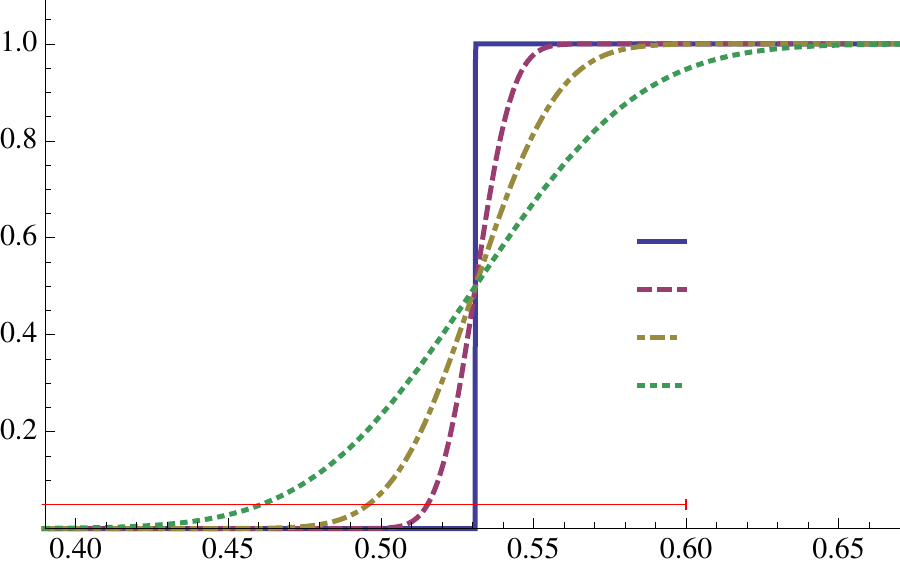}
  \put(79, 35.5){\footnotesize $n \to \infty$}
  \put(79, 30.5){\footnotesize $n = 10^4$}
  \put(79, 25){\footnotesize $n = 2000$}
  \put(79, 19.5){\footnotesize $n = 500$}
  \put(40, -3){\footnotesize rate, $R$}
  \put(-5, 17){\rotatebox{90}{\footnotesize tolerated error, $\eps$}}
  \put(43,23){\vector(-2,1){10}}
  \put(18,31){\footnotesize achievable region}
  \put(35,60){\footnotesize capacity}
  \put(52.9,65){\vector(0,-1){6}}
  \put(8,8.5){\footnotesize \textcolor{red}{cf.\ Subfigure (b)}}
\end{overpic}
\label{fig:dephase1}
}
\subfigure[Boundary of the achievable region for $\eps = 5\%$ (third order approximation in~\eqref{eq:res_z}).]{
\begin{overpic}[width=0.45\textwidth]{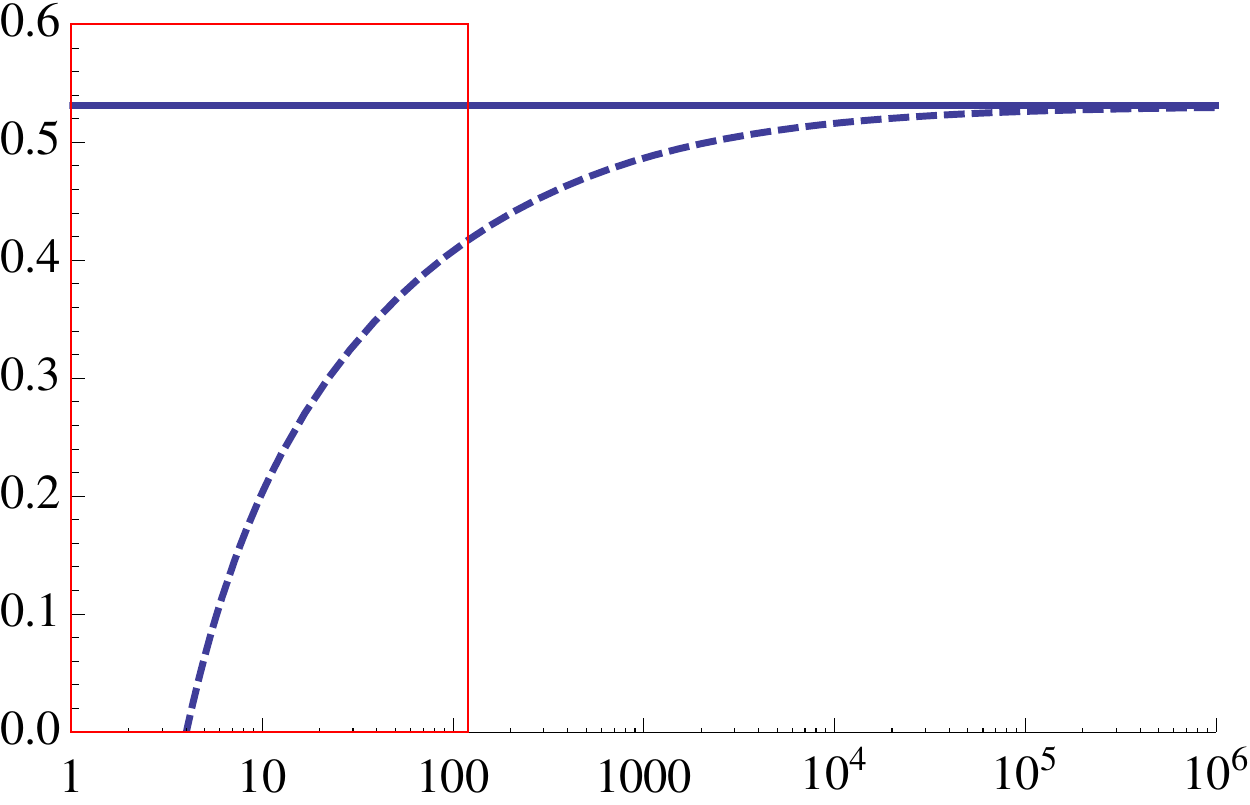}
  \put(30, -4){\footnotesize number of channel uses, $n$}
  \put(-5.5, 27){\rotatebox{90}{\footnotesize rate, $R$}}
  \put(45,44){\vector(2,-1){10}}
  \put(50,44){\footnotesize achievable region}
  \put(32,8){\rotatebox{90}{\footnotesize \textcolor{red}{cf.\ Subfigure (c)}}}
\end{overpic}
\label{fig:dephase2}
}
\subfigure[Comparison of strict bounds with third order approximation for $\eps = 5\%$.]{
\begin{overpic}[width=0.45\textwidth]{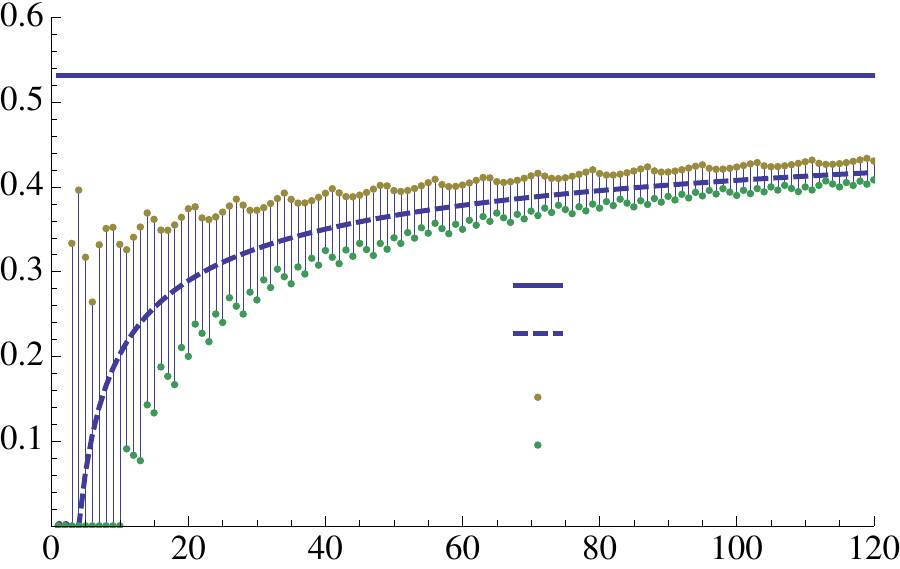}
  \put(30, -4){\footnotesize number of channel uses, $n$} 
  \put(65,31){\footnotesize capacity} 
  \put(65,25.5){\footnotesize $3^{\textrm{rd}}$ order approx.} 
  \put(65,18){\footnotesize exact outer bound} 
  \put(65,12.5){\footnotesize exact inner bound} 
\end{overpic}
\label{fig:dephase3}
}
 \caption{Approximation of the non-asymptotic achievable rate region of a qubit dephasing channel with $\gamma = 0.1$ (see Result~\ref{res:z}).}
 \label{fig:dephase}
\end{figure}


\subsection{Qubit Dephasing Channel}

The first example is the qubit dephasing channel. Building on recent work that established the strong converse for this channel~\cite{tomamichelww14}, we will show that its non-asymptotic achievable region is equivalent to the corresponding region of a (classical) binary symmetric channel. This allows us to employ results from classical information theory and establish the following characterization of the achievable region for the qubit dephasing channel.

\begin{result}\label{res:z}
   For the qubit dephasing channel $\cZ_{\gamma}$ with $\gamma \in [0,1]$, the boundary $\hat{R}(n;\eps)$ satisfies
   \begin{align}\label{eq:res_z}
     \hat{R}(n;\eps) = 1-h(\gamma) + \sqrt{\frac{v(\gamma)}{n}}\, \Phi^{-1}(\eps) + \frac{\log n}{2n}+O\Big(\frac1{n}\Big)\,,
   \end{align}
   where $\Phi$ is the cumulative standard Gaussian distribution, $h(\gamma):= -\gamma \log\gamma-(1-\gamma)\log(1-\gamma)$ denotes the binary entropy and $v(\gamma)$ the corresponding variance, $v(\gamma) := \gamma (\log \gamma + h(\gamma))^2 + (1-\gamma) (\log (1-\gamma) + h(\gamma) )^2$.
\end{result}

The expression without the remainder term $O(\frac1{n})$ is called the third order approximation of the (boundary of the) non-asymptotic achievable region. It is visualized in Figures~\ref{fig:dephase} for an example channel with $\gamma = 0.1$. In Figure~\ref{fig:dephase1} we plot the smallest achievable error $\eps$ as a function of the rate $R$. Here we use the second order expansion without the term $\frac1{2n} \log n$ since it can conveniently be solved for $\eps$. In the limit $n \to \infty$ we see an instantaneous transition of $\eps$ from $0$ to $1$, the signature of a strong converse: coding below the capacity $Q(\cZ_{\gamma}) = 1 - h(\gamma)$ is possible with perfect fidelity whereas coding above the capacity will necessarily result in a vanishing fidelity. 

In Figure~\ref{fig:dephase2} we plot the third order approximation in~\eqref{eq:res_z} for the highest achievable rate, $\hat{R}(n;\eps)$, as a function of $n$ for a fixed fidelity of $95\%$ (i.e.\ we set $\eps = 5\%$). 
For example, this allows us to calculate how many times we need to use the channel in order to approximately achieve the quantum capacity. The third order approximation shows that we need approximately $850$ channel uses to achieve $90\%$ of the quantum capacity. Note that a coding scheme achieving this would probably require us to coherently manipulate $850$ qubits in the decoder, which appears to be a quite challenging task. This example shows that the capacity does not suffice to characterize the ability of a quantum channel to transmit information, and further motivates the study of the achievable region for finite $n$. 

Finally, we remark that the third order approximation is quite strong even for small $n$. To prove this we compare it to concrete upper and lower bounds on $\hat{R}(n;\eps)$ in Figure~\ref{fig:dephase3} and see that the remainder term $O(\frac{1}{n})$ becomes negligible for fairly small $n\approx100$ for the present values of $\gamma$ and $\eps$.


\subsection{Qubit Erasure Channel}

\begin{figure}
\subfigure[Boundary of the achievable region.]{
\begin{overpic}[width=0.45\textwidth]{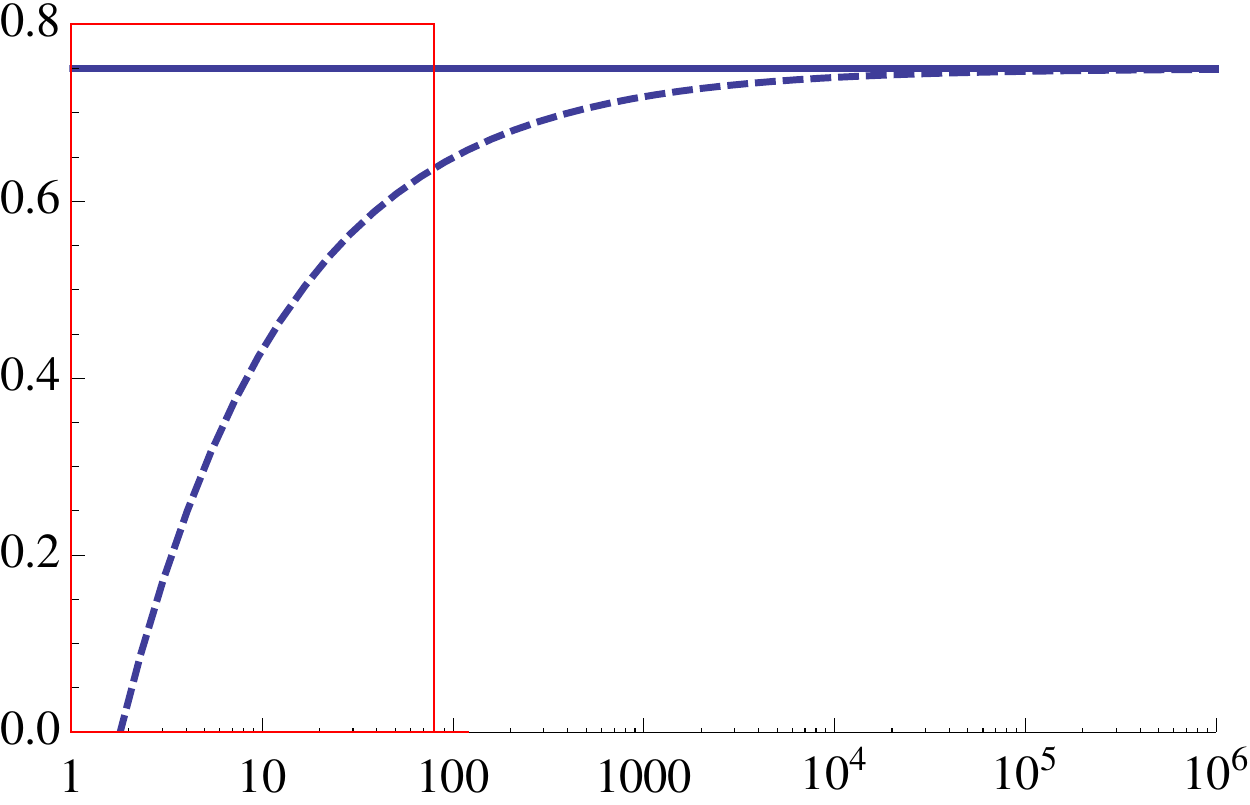} 
  \put(30, -4){\footnotesize number of channel uses, $n$} 
  \put(-5.5, 27){\rotatebox{90}{\footnotesize rate, $R$}}
  \put(40,45){\vector(2,-1){10}}
  \put(45,45){\footnotesize achievable region}
  \put(30,8){\rotatebox{90}{\footnotesize \textcolor{red}{cf.\ Subfigure (b)}}}
\end{overpic}
\label{fig:erasure1}
}
\subfigure[Comparison of exact bounds with third order approximation.]{
\begin{overpic}[width=0.45\textwidth]{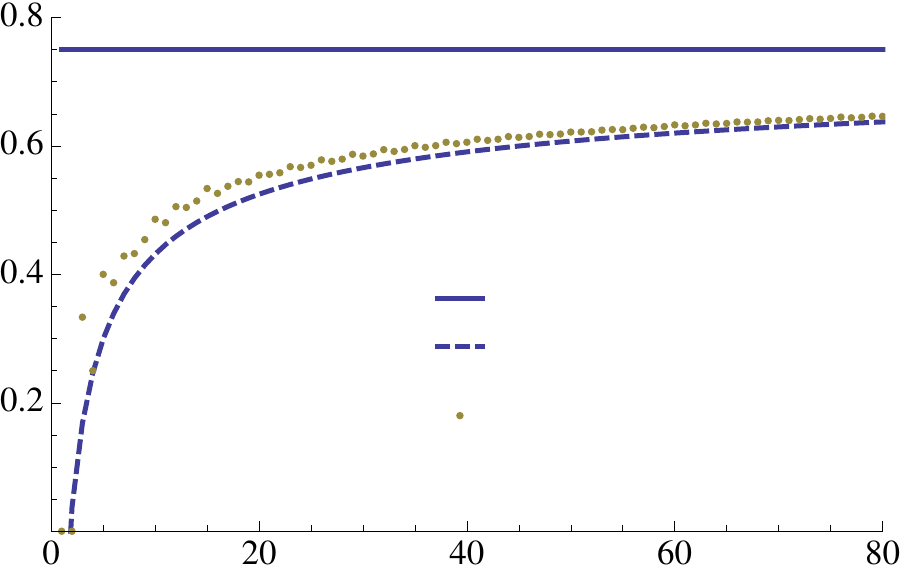} 
  \put(30, -4){\footnotesize number of channel uses, $n$} 
  \put(57,30){\footnotesize capacity} 
  \put(57,24.5){\footnotesize $3^{\textrm{rd}}$ order approx.} 
  \put(57,17){\footnotesize exact boundary} 
\end{overpic}
\label{fig:erasure2}
}
\caption{Approximation of the non-asymptotic achievable rate region (with classical post-processing assistance) of a qubit erasure channel with $\beta = 0.25$ and error parameter $\eps = 1\%$ (see Result~\ref{res:e}).}
\label{fig:erasure}
\end{figure}

Another channel we can analyze in this manner is the qubit erasure channel, given by the map
\begin{align}
\cE_{\beta}:\rho\mapsto(1-\beta)\rho+\beta\proj{e}\,,
\end{align}
where $\beta \in [0,1]$ is a parameter and $\proj{e}$ is a pure state orthogonal to $\rho$. Here we investigate coding schemes that allow classical post-processing (cpp) between the sender and receiver after the quantum transmission (see also Figure~\ref{fig:scheme-cpp} in Section~\ref{sec:notation}). We denote the corresponding boundary of the achievable region by $\hat{R}^{\mathrm{cpp}}(n;\eps)$. Since this includes all codes that do not take advantage of cpp, we clearly have $\hat{R}(n;\eps) \leq \hat{R}^{\mathrm{cpp}}(n;\eps)$ for all channels.

Here we can determine the boundary $\hat{R}^{\mathrm{cpp}}(n;\eps)$ exactly, again by generalizing~\cite{tomamichelww14} and relating the problem to that of the classical erasure channel.

\begin{result}\label{res:e}
   For the qubit erasure channel $\cE_{\beta}$ with $\beta \in [0,1]$, the boundary $\hat{R}^{\mathrm{cpp}}(n;\eps)$ satisfies
   \begin{align}
   \eps = \sum_{l=n-k+1}^n {n \choose l} \beta^l (1-\beta)^{n-l}
\left(1- 2^{n\left(1-\hat{R}^{\mathrm{cpp}}(n;\eps)\right)-l}\right)\,. \label{eq:bec-exact}
   \end{align}
   Moreover, for large $n$, we have the expansion
   \begin{align}
     \hat{R}^{\mathrm{cpp}}(n;\eps) = 1-\beta + \sqrt{\frac{\beta(1-\beta)}{n}} \Phi^{-1}(\eps) + O\Big(\frac{1}{n}\Big)\,.
   \end{align}
\end{result}

The latter expression is a third order approximation of the achievable region, but this time the term proportional to $\frac{1}{n}\log n$ vanishes. In Figure~\ref{fig:erasure} we show this approximation for a qubit erasure channel with $\beta = 0.25$ and fidelity $99\%$. In Figure~\ref{fig:erasure1} we see that the non-asymptotic achievable region reaches $90\%$ of the channel capacity for $n \approx 180$. Again, this confirms that the non-asymptotic treatment is crucial in the quantum setting. In Figure~\ref{fig:erasure2} we compare the third order approximation with the exact boundary of the achievable region in~\eqref{eq:bec-exact}. We see that the approximation is already very precise (and the term $O(\frac{1}{n})$ thus negligible) for fairly small $n \approx 50$.


\subsection{Depolarizing Channel}

\begin{figure}
\subfigure[Inner and outer bounds for $\alpha = 0.05$ and $\eps = 1\%$.]{
    \begin{overpic}[width=0.45\textwidth]{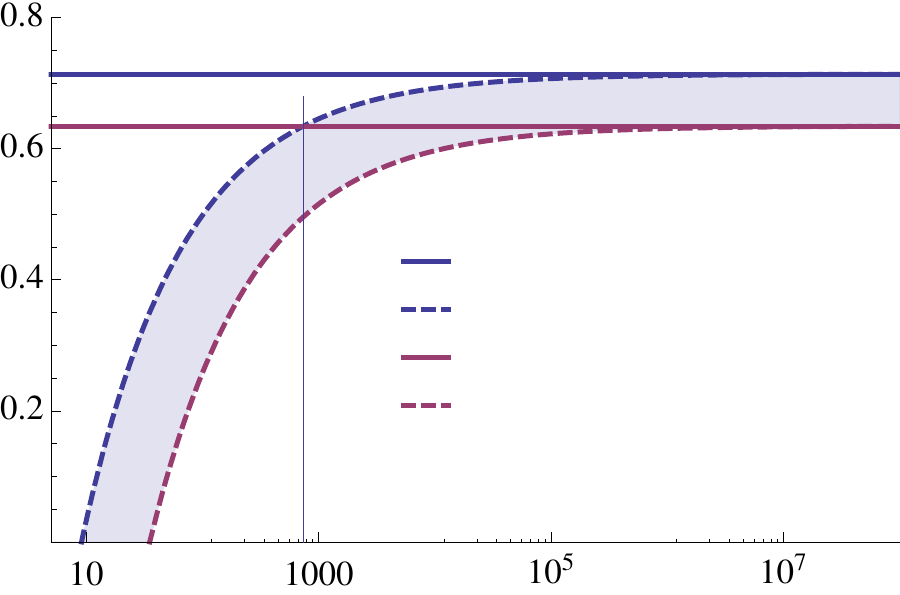}
  \put(30, -4){\footnotesize number of channel uses, $n$} 
  \put(-5.5, 27){\rotatebox{90}{\footnotesize rate, $R$}}
      \put(36,8.5){\footnotesize $N_0 = 738$}
      \put(53,35.5){\footnotesize $Q(\cZ_{\alpha})$}
      \put(53,30.5){\footnotesize outer bound ($2^{\textrm{nd}}$ order)}
      \put(53,25){\footnotesize $I_{\mathrm{c}}(\cD_{\alpha})$}
      \put(53,20){\footnotesize inner bound ($2^{\textrm{nd}}$ order)}
    \end{overpic}
    \label{fig:depol1}
}
\subfigure[Exact outer bound for $\alpha = 0.0825$ and $\eps = 5.5\%$.]{
    \begin{overpic}[width=0.45\textwidth]{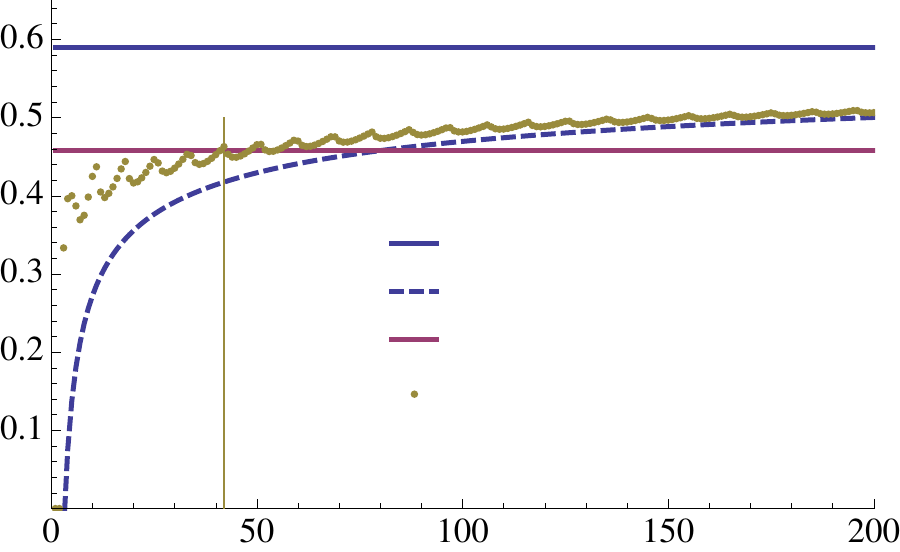}
  \put(30, -4){\footnotesize number of channel uses, $n$} 
      \put(27,8){\footnotesize $N_0 = 42$}
      \put(52,33.5){\footnotesize $Q(\cZ_{\alpha})$}
      \put(52,28){\footnotesize outer bound ($3^{\textrm{rd}}$ order)}
      \put(52,22.5){\footnotesize $I_{\mathrm{c}}(\cD_{\alpha})$}
       \put(52,16.5){\footnotesize exact outer bound}
   \end{overpic}
   \label{fig:depol2}
}
\caption{Approximate inner and outer bounds on the non-asymptotic achievable rate region for the depolarizing channel (see Results~\ref{res:depol} and~\ref{res:inner}).}
\label{fig:depol}
\end{figure}

Another basic channel of interest is the qubit depolarizing channel. It is given by the map
\begin{align}
\mathcal{D}_{\alpha}: \rho \mapsto (1-\alpha)\rho + \frac{\alpha}{3}\left(X \rho X + Y \rho Y + Z \rho Z \right)\,,
\end{align}
where $\alpha\in[0,1]$ is a parameter and $X,Y,Z$ are the Pauli operators. For this channel not even a closed formula for the quantum capacity $Q(\mathcal{D}_{\alpha})$ is known, and the non-regularized coherent information
\begin{align}
I_{c}(\mathcal{D}_{\alpha})=1-h(\alpha)-\alpha\log3
\end{align}
is only a strict lower bound on it~\cite{divincenzo98} (where $h(\alpha)$ again denotes the binary entropy). However, various upper bounds on the quantum capacity have been established as well~\cite{smolin08,winter08,scholz14,Rains2001}. For example, in~\cite{smolin08} it is shown that $Q(\mathcal{D}_{\alpha})\leq Q(\mathcal{Z}_{\alpha})=1-h(\alpha)$, the quantum capacity of the qubit dephasing channel with dephasing parameter $\alpha$. Here we extend this result to the non-asymptotic setting and find the following outer (converse) bound for the achievable rate region.

\begin{result}\label{res:depol}
For the qubit depolarizing channel $\mathcal{D}_{\alpha}$ with $\alpha \in [0,1]$, the boundary $\hat{R}^{\mathrm{cpp}}(n;\eps)$ satisfies
\begin{align}
\hat{R}_{\mathcal{D}_{\alpha}}(n;\eps) \leq \hat{R}_{\mathcal{Z}_{\alpha}}(n;\eps)\,,
\end{align}
where $\hat{R}_{\mathcal{Z}_{\alpha}}(n;\eps)$ denotes the boundary of the achievable rate region for the qubit dephasing channel with dephasing parameter $\alpha$ as in Result~\ref{res:z}. 
\end{result}

Clearly this allows us to recycle the bounds in Result~\ref{res:z} and use them as outer bounds for the achievable region. This is done in Figure~\ref{fig:depol} for two example channels. In Figure~\ref{fig:depol1} we plot the second order approximation of the outer bound for a depolarizing channel with $\alpha = 0.05$ and $99\%$ fidelity. We can see that in order to exceed the coherent information, we will need to code for at least $N_0 = 738$ channel uses. This indicates that the question of whether the coherent information is a good or bad lower bound on the asymptotic quantum capacity is  not practically relevant as long as we do not have a quantum computer that is able to perform a decoding operation on many hundreds of qubits. We also show a second order approximation for a general inner bound which is given in Result~\ref{res:inner} below.

In Figure~\ref{fig:depol2} we examine a channel with parameters $\alpha = 0.0825$ and $\eps = 5.5\%$. Instead of using an approximation for the outer bound we use the exact outer bound to give the answer (it is $42$) to the question of how many channel uses we need at minimum to exceed the coherent information. However, note that this does not give us any indication of what code (in particular if it is assisted or not) one would need to use for this purpose. 


\subsection{General Bounds}

We have so far focused our attention on two specific (albeit very important) examples of channels. However, many of the results derived in this paper also hold more generally. For example, we find the following outer (converse) bound for coding schemes that allow classical post-processing.

\begin{result}\label{res:outer}
For any quantum channel $\cN$, the boundary $\hat{R}^{\mathrm{cpp}}(n;\eps)$ satisfies
  \begin{align}
    \hat{R}^{\mathrm{cpp}}(n;\eps) &\leq-\log f\big(\cN^{\otimes n},\eps\big)\,,
  \end{align}
where $f(\cN,\eps)$ is the solution to a semidefinite optimization program. Moreover, if $\cN$ is covariant we find the asymptotic expansion
\begin{align}\label{eq:outer_cov}
     \hat{R}^{\mathrm{cpp}}(n;\eps)\leq\hat{R}_{\mathrm{outer}}^{\mathrm{cpp}}(n;\eps) , \quad \textrm{with} \quad  \hat{R}_{\mathrm{outer}}^{\mathrm{cpp}}(n;\eps)= I_{R}(\cN) + \sqrt{\frac{ V_R^{\eps}( \cN)}{n}} \, \Phi^{-1}(\eps) + O\bigg(\frac{\log n}{n}\bigg)\,,
\end{align}
where the Rains information, $I_R(\cN)$, and its variance, $V_R^{\eps}(\cN)$, are defined in Theorem~\ref{thm:covariant_outer}.
\end{result}

The semidefinite optimization program $f(\cN, \eps)$ (see Section~\ref{sec:converse} for details) is similar in spirit to the metaconverse for classical channel coding by Polyanskiy {\it et al.}~\cite{polyanskiy10}, formulated as a linear program by Matthews~\cite{matthews11}.\footnote{For quantum coding, Matthews and Leung~\cite{matthews14} also give semidefinite optimization program lower bounds on the error boundary $\hat{\eps}_{\cN}(n; R) := \max \{ \eps : (R, n, \eps) \textrm{ is achievable} \}$ for fixed rate $R$.} Note that the bound~\eqref{eq:outer_cov} is tight up to the second order asymptotically for the qubit dephasing channel (Result~\ref{res:z}) and the erasure channel with classical post-processing assistance (Result~\ref{res:e}). However, in the generic covariant case the bound is in general not expected to be tight even in first order. Moreover, if the channel is not covariant we cannot find any non-trivial outer bounds on the non-asymptotic achievable region that allows for an asymptotic expansion in the above sense.


Moreover, an inner (achievability) bound of the form shown in Result~\ref{res:z} also holds generally for all quantum channels.

\begin{result}\label{res:inner}
  For any quantum channel $\cN$, the boundary $R^*(n;\eps)$ satisfies
  \begin{align}\label{eq:general_inner}
    R^*(n;\eps) \geq R^*_{\mathrm{inner}}(n;\eps) , \quad \textrm{with} \quad 
    R^*_{\mathrm{inner}}(n;\eps) = I_{c}(\cN) + \sqrt{\frac{V_{c}^{\eps}(\cN)}{n}} \Phi^{-1}(\eps) + O\left( \frac{\log n}{n} \right)\,,
  \end{align}
where the coherent information, $I_{\textrm{c}}(\cN)$, and its variance, $V_{c}^{\eps}(\cN)$, are defined in Theorem~\ref{th:achieve-second}.
\end{result}

Note that the bound~\eqref{eq:general_inner} is tight up to the second order asymptotically for the qubit dephasing channel (Result~\ref{res:z}). However, the bound does not tightly characterize the achievable region of general channels, although we have reasons to conjecture that it does for degradable channels. In fact, this bound is a direct consequence of an inner bound due to Morgan and Winter~\cite{morgan13} together with a second order expansion of smooth entropies in~\cite{tomamichel12}.



\section{Notation, Information Measures, and Codes}\label{sec:notation}

In this paper $\log$ denotes the binary logarithm. To express the second order expansion of the non-asymptotic quantities we need the \emph{cumulative standard Gaussian distribution} function
\begin{align}
\Phi(x):=\frac{1}{\sqrt{2\pi}}\int_{-\infty}^{x} e^{-\frac{y^{2}}{2}}\mathrm{d}y\,.
\end{align}

We denote \emph{finite-dimensional Hilbert spaces} by capital letters. In particular, we use $A$ and $B$ to model the channel input and output space, whereas $M$ and the isomorphic spaces $M'$ and $M''$ are used to model the quantum systems containing the maximally entangled state to be transmitted. We also use $A^n$ to denote the $n$-fold tensor product of $A$ for any $n \in \mathbb{N}$. The dimension of $A$ is denoted by $|A|$, we use $[A]$ to denote the set $\{1, 2, \ldots, |A|\}$, and fix a standard orthonormal basis $\{ \ket{x}_A \}_{x \in [A]}$. 

We use $\cL(A)$ to denote the set of \emph{linear operators} on $A$, $\cP(A)$ to denote the set of \emph{positive semi-definite operators} on $A$, and $\cS(A) := \{ \rho \in \cP(A) : \tr(\rho) = 1 \}$ to denote \emph{quantum states} with unit trace on $A$. A quantum state is called \emph{pure} if it has rank one.  We write $\rho \ll \sigma$ if the support of $\rho$ is contained in the support of $\sigma$. For general positive operators $\rho,\sigma\in\mathcal{P}(A)$, we define Uhlmann's \emph{fidelity}~\cite{uhlmann85} as
\begin{align}
F(\rho,\sigma):= \left( \left\|\sqrt{\rho}\sqrt{\sigma}\right\|_{1} \right)^2\,,
\end{align}
where $\|X\|_{1}:=\tr\big(\sqrt{XX^{\dagger}}\big)$ is the trace norm. If one of the states is pure, this expression simplifies to $F( |\psi\rangle\!\langle \psi|, \sigma ) = \langle \psi | \sigma | \psi \rangle$.

We often use subscripts to clarify which Hilbert spaces an operator acts on. Let $A'$ be isomorphic to $A$. Throughout the manuscript we denote the \emph{maximally entangled state} on $AA'$ by $\phi_{AA'}=\proj{\phi}_{AA'}$ with $\ket{\phi}_{AA'}=|A|^{-1/2}\sum_{x \in [A]}\ket{x}_{A}\otimes\ket{x}_{A'}$. For a general state $\rho_A\in \cS(A)$, its \emph{canonical purification} is $\ket{\psi^\rho}_{AA'}=|A|\sqrt{\rho_A}\otimes 1_{A'}\ket{\phi}_{AA'}$. We clearly have $\tr_{A'}(\psi^\rho_{AA'}) = \rho_A$ where $\tr_{A'}$ denotes the partial trace over $A'$.

\emph{Quantum channels} are completely positive and trace preserving maps between operators and denoted by calligraphic letters. In particular, we investigate channels $\cN_{A\to B}$ that map $\cP(A)$ to $\cP(B)$. The \emph{Choi state} of a quantum channel $\cN_{A\to B}$ is defined using the corresponding non-calligraphic letter as $N_{AB}=(\mathcal{I}_{A}\otimes\cN_{A'\to B})(|A|\phi_{AA'})$. 
\bigskip


\subsection{Information Measures}

Our asymptotic results are stated in terms of the following quantities.
For $\rho\in\cS(\cH)$ and $\sigma\in\cP(\cH)$ with $\rho \ll \sigma$, Umegaki's \emph{relative entropy}~\cite{umegaki62} and the quantum \emph{relative entropy variance}~\cite{tomamichel12,li12} are given by
\begin{align}
D(\rho\|\sigma):=\tr\left[\rho\left(\log\rho-\log\sigma\right)\right]\quad\mathrm{and}\quad V(\rho\|\sigma)&:=\tr\left[\rho\left(\log\rho-\log\sigma-D(\rho\|\sigma)\right)^{2}\right]\,,
\end{align}
respectively. The \emph{conditional entropy} and the \emph{conditional entropy variance}~\cite{tomamichel12} of a state $\rho_{AB} \in \cS(AB)$ are given as
\begin{align}
H(A|B)_{\rho}:=-D(\rho_{AB}\|1_{A}\otimes\rho_{B})\quad\mathrm{and}\quad V(A|B)_{\rho}:=V(\rho_{AB}\|1_{A}\otimes\rho_{B})\,,
\end{align}
respectively. Related to this we define the \emph{coherent information} of $\rho_{AB}$ as $I(A\rangle B)_{\rho} := -H(A|B)_{\rho}$ and its corresponding variance $V(A\rangle B)_{\rho} := V(A|B)_{\rho}$.

For our non-asymptotic results, we require the following quantities. For $\rho\in\cS(\cH)$ and $\sigma\in\cP(\cH)$ the \emph{hypothesis testing relative entropy}~\cite{wang10} is defined as
\begin{align}\label{eq:betadef}
D_H^{\eps}(\rho\|\sigma) := - \log \beta_{1-\eps}(\rho\|\sigma)\quad\mathrm{with}\quad\beta_{1-\eps}(\rho\|\sigma):= \min_{0 \leq \Lambda \leq 1 \atop \tr[\Lambda\rho] \geq 1-\eps} \tr [\Lambda\sigma]\,.
\end{align}

The \emph{hypothesis testing Rains relative entropy} of a quantum channel $\cN_{A\to B}$ is defined as (following the generalized divergence framework discussed in~\cite{tomamichelww14}),
\begin{align}
\label{eq:rainsinfo}
I_R^{\eps}(\cN_{A\to B}):= \sup_{\rho_{A} \in \cS(A)} I_R^{\eps}(A:B)_{\cN_{A'\to B}(\psi^\rho_{AA'})}\quad\mathrm{with}\quad I_R^{\eps}(A:B)_{\rho}:= \inf_{\sigma_{AB} \in \textrm{PPT}'(A:B)} D_H^{\eps}(\rho_{AB}\|\sigma_{AB})\,,
\end{align}
where PPT$'(A\!:\!B)$ is the \emph{Rains set}~\cite{Rains2001,audenaert02}, a superset of the set of positive partial transpose (PPT) states. It is defined as
\begin{align}
\textrm{PPT}'(A\!:\!B):=\Big\{\tau_{AB}\in\mathcal{P}(AB)\, \Big|\, \big\|T_{B}(\tau_{AB})\big\|_{1}\leq1 \Big\}\,,
\end{align}
where $T_{B}$ denotes the partial transpose map on $B$. In particular, we have the following inequality~\cite[Lm.~2]{rains99}. For every $\sigma_{AB} \in \textrm{PPT}'(A:B)$, we have
\begin{align}\label{eq:rains-ineq}
\langle \phi | \sigma_{AB} | \phi \rangle_{AB} \leq \frac{1}{|M|} 
\end{align}
for all maximally entangled states $\ket{\phi}_{AB}$ of local dimension $|M|$. Finally, a quantum channel $\cN_{A\to B}$ is called PPT preserving if a PPT state input necessarily results in a PPT state output. It turns out that PPT-preserving channels output PPT states for any input, since they have PPT Choi states~\cite{Rains2001} (see the discussion after Eq.\ 4.13). Channels with PPT Choi states were also called PPT-binding in~\cite{horodecki_binding_2000}.
\bigskip



\subsection{Codes for Entanglement Transmission Assisted by Classical Post-Processing}
\label{sec:notation-codes}

We have defined unassisted entanglement-transmission codes in Section~\ref{sec:results} and in Figure~\ref{fig:scheme}. Let us reintroduce them in the context of codes assisted by classical post-processing.

\begin{figure}
\begin{overpic}[width=.5\textwidth]{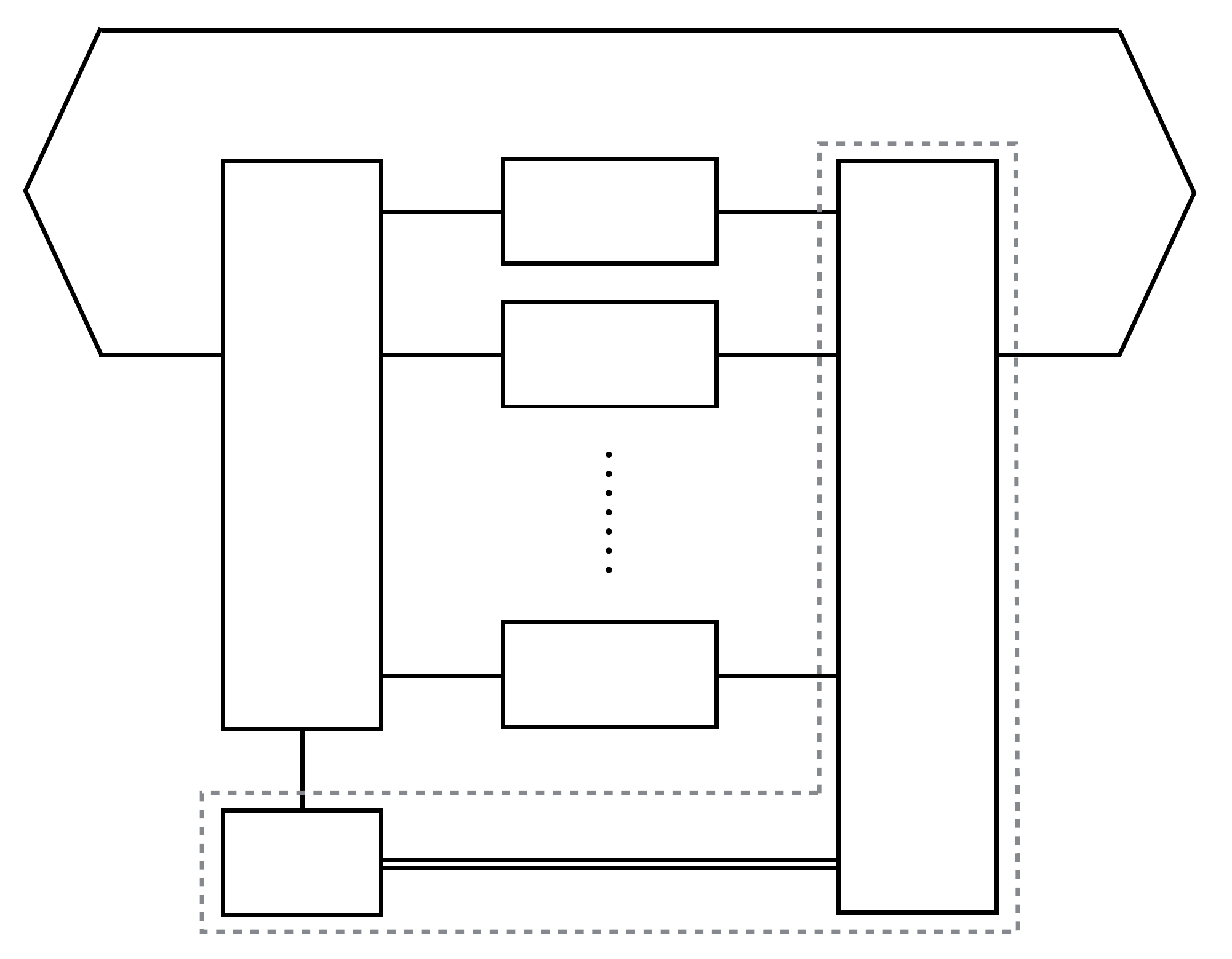}
  \put(24,40){\footnotesize $\cE$}
  \put(73,34){\footnotesize$\cD$}
  \put(23,6.5){\footnotesize$\cD$}
  \put(48,21.5){\footnotesize$\cN$}
  \put(48,47){\footnotesize$\cN$}
  \put(48,58.5){\footnotesize$\cN$}
  \put(-9,61){\footnotesize${\phi_{MM'}}$}
  \put(98,61){\footnotesize${\rho_{MM''}}$}
  \put(11,50.5){\footnotesize$M'$}
  \put(11,71){\footnotesize$M$}
  \put(84,50.5){\footnotesize$M''$}
  \put(20,14.8){\footnotesize$Q$}
  \put(34.5,24.5){\footnotesize$A_n$}
  \put(34.5,50.5){\footnotesize$A_2$}
  \put(34.5,62){\footnotesize$A_1$}
  \put(60,24.5){\footnotesize$B_n$}
  \put(60,50.5){\footnotesize$B_2$}
  \put(60,62){\footnotesize$B_1$}
\end{overpic}
  \caption{Coding Scheme for entanglement transmission over $n$ uses of a channel $\cN_{A\to B}$ with classical post-processing. The encoder $\cE \equiv \cE_{M' \to A^n Q}$ encodes $M'$ into the channel input systems and a local memory $Q$. Later, the decoder $\cD \equiv \cD_{Q B^n \to M''}$ recovers the maximally entangled state from the channel output systems and the memory $Q$ using classical communication and local operations. The performance of the code is measured using the fidelity $F(\phi_{MM'}, \rho_{MM''})$.}
  \label{fig:scheme-cpp}
\end{figure}

For this, we again consider any quantum channel $\cN \equiv \cN_{A \to B}$ and its $n$-fold extension $\cN^{\otimes n}$ that maps states on $A^n$ to states on $B^n$. An \emph{entanglement transmission code assisted by classical post-processing} for $\cN^{\otimes n}$ is given by a triplet $\{ |M|, \cE, \cD\}$, as depicted in Figure~\ref{fig:scheme-cpp}. Here, $|M|$ is the local dimension of a maximally entangled state $\ket{\phi}_{MM'}$ that is to be transmitted over $\cN^{\otimes n}$. The encoder $\cE_{M' \to A^n Q}$ is a completely positive trace-preserving map that prepares the channel inputs $A_1, A_2, \ldots A_n$ and a local memory system, which we denote by $Q$. The decoder $\cD_{Q B^n \to M''}$ is a completely positive trace-preserving map that is restricted to local operations and classical communication with regards to the bipartition $Q:B^n$ and outputs $M''$ on the receiver's side.

Finally, we note that unassisted codes are recovered if we choose $Q$ to be trivial. Hence, unassisted codes are contained in the set of assisted codes.


\section{Proofs: General Bounds}

It is convenient to first discuss the general bounds discussed in Results~\ref{res:outer} and~\ref{res:inner}.

\subsection{General Outer Bounds on the Achievable Region}\label{sec:converse}

In this section we derive the general outer bounds from Result~\ref{res:outer}, precisely stated as Corollary~\ref{cor:relaxation} and Theorem~\ref{thm:covariant_outer} below. Our results are inspired by the strong converse results for generalized dephasing channels from~\cite{tomamichelww14} and the metaconverse for classical channel coding~\cite{polyanskiy10}. 

We first formulate a general metaconverse bounding possible rates $R$ given a tolerated error $\eps$ for single uses of a fixed channel $\cN$. This bound has the useful property that channel symmetries can be used to simplify its form. Nevertheless, when applied to $n$ instances of $\cN$, the bound is not efficiently computable. Loosening the bound produces a more computationally tractable convex optimization, specifically a semidefinite program.

\begin{lemma}\label{lem:metaconverse}
Let $\mathcal{N}_{A\to B}$ be a quantum channel. Then, for any fixed $\eps \in(0,1)$, the achievable region with cpp-assistance satisfies,\footnote{Indeed, the outer bound also holds for coding schemes with (unphysical) positive partial transpose (PPT) assistance and this includes in particular classical pre- and post-processing assistance (see, e.g., \cite{matthews14} for a precise definition of PPT assisted codes).\label{ft:ppt}}
\begin{align}\label{eq:meta_first}
\hat{R}^{\mathrm{cpp}}(1;\eps)\leq  I_R^{\eps}\big(\cN\big)\,.
\end{align}
\end{lemma}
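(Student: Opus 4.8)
The plan is to establish the single-shot converse bound $\hat{R}^{\mathrm{cpp}}(1;\eps)\leq I_R^{\eps}(\cN)$ by combining a hypothesis-testing argument with the defining property~\eqref{eq:rains-ineq} of the Rains set. Concretely, I would fix an arbitrary cpp-assisted code $\{|M|, \cE, \cD\}$ achieving error $\eps$ on a single use of $\cN$, so that $F(\phi_{MM'}, \rho_{MM''})\geq 1-\eps$ where $\rho_{MM''}=(\cD\circ\cN\circ\cE)(\phi_{MM'})$. The goal is to show that $\log|M|$ cannot exceed $I_R^{\eps}(\cN)$. Since $I_R^{\eps}(\cN)=\sup_{\rho_A} \inf_{\sigma_{AB}\in \mathrm{PPT}'} D_H^{\eps}(\cN_{A'\to B}(\psi^\rho_{AA'})\|\sigma_{AB})$, it suffices to exhibit, for the input state induced by the encoder, a feasible test and a Rains-set operator that together lower-bound $D_H^{\eps}$ by $\log|M|$.

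The key steps, in order, are as follows. First I would consider the state $\cN_{A'\to B}(\psi^\rho_{AA'})$ where $\rho_A$ is the reduced input state prepared by the encoder (after purifying), so that $I_R^{\eps}(\cN)$ dominates $I_R^{\eps}(A:B)$ evaluated on this particular state. Second, I would use the fidelity condition to construct a good test $\Lambda$ for the hypothesis-testing divergence: the projector onto (or operator associated with) the target maximally entangled state $\phi_{MM'}$, transported back through the decoder, yields a measurement that accepts the true state with probability at least $1-\eps$. This is where the fidelity guarantee $F\geq 1-\eps$ enters, ensuring the test is $(1-\eps)$-feasible in the sense of~\eqref{eq:betadef}. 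Third, for \emph{any} $\sigma_{AB}\in\mathrm{PPT}'(A:B)$, I would bound $\tr[\Lambda\sigma_{AB}]$ from above by $1/|M|$: the decoder $\cD$ is LOCC (hence PPT-preserving) with respect to the $Q\!:\!B^n$ bipartition, so it maps $\sigma$ into another Rains-set operator, and then the defining inequality~\eqref{eq:rains-ineq} gives $\langle\phi|(\cD\circ\cN)(\sigma)|\phi\rangle\leq 1/|M|$. Taking the infimum over $\sigma$ and the logarithm yields $\log|M|\leq D_H^{\eps}\leq I_R^{\eps}(\cN)$.

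The main obstacle I anticipate is the third step, namely showing that the relevant Rains-set operator is mapped to something on which~\eqref{eq:rains-ineq} can be applied. One must carefully track how the partial-transpose norm behaves under the LOCC decoder and under the channel $\cN$ itself, and confirm that the composition $\cD\circ\cN$ (or an appropriate pre-image construction) keeps us inside $\mathrm{PPT}'$ so that the maximally-entangled-state overlap bound is legitimate. In particular, I would need to verify that cpp-assistance (LOCC across $Q\!:\!B^n$) does not allow the overlap to exceed $1/|M|$, which is precisely where the PPT-preserving property of LOCC operations and the robustness of the Rains set under such maps must be invoked. The footnote's remark that the bound extends to full PPT assistance suggests that the cleanest route is to argue directly at the level of PPT-preserving operations rather than LOCC, treating cpp as a special case.

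Once the single-shot inequality $D_H^{\eps}(\cN_{A'\to B}(\psi^\rho_{AA'})\|\sigma_{AB})\geq\log|M|$ holds for every feasible $\sigma$, the chain $\log|M|\leq\inf_\sigma D_H^{\eps}=I_R^{\eps}(A:B)\leq\sup_{\rho_A} I_R^{\eps}(A:B)=I_R^{\eps}(\cN)$ closes the argument; optimizing over all codes achieving error $\eps$ then gives $\hat{R}^{\mathrm{cpp}}(1;\eps)\leq I_R^{\eps}(\cN)$. The remaining steps (purification of the encoder input, reduction to the canonical purification $\psi^\rho$, and the passage from a fixed code to the supremum defining $\hat{R}^{\mathrm{cpp}}$) are routine and I would not expect difficulty there.
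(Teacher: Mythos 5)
Your proposal follows essentially the same route as the paper's proof: you construct the hypothesis test by pulling the projector onto $\phi_{MM'}$ back through the decoder (and encoder isometry), use the fidelity guarantee to certify $(1-\eps)$-feasibility, and invoke the PPT-preserving nature of these operations together with the Rains-set overlap bound~\eqref{eq:rains-ineq} to get $\tr[\Lambda\sigma_{AB}]\leq 1/|M|$ for every $\sigma_{AB}\in\mathrm{PPT}'(A\!:\!B)$, before taking the infimum over $\sigma$ and the supremum over inputs. The obstacle you flag\,---\,verifying that the decoder and the purifying isometry keep you inside the regime where~\eqref{eq:rains-ineq} applies\,---\,is exactly the point the paper handles by writing the code's input as $W_{A'\to\bar AQ\bar Q}\ket{\psi^\rho}_{AA'}$ and observing that all maps acting on $\sigma_{A'B}$ are PPT-preserving across the relevant cut, so your plan is sound as stated.
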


\begin{proof}
First, observe that the encoding operation $\cE_{M'\to AQ}$ can be chosen to be an isometry without loss of generality, because we may include any extension systems needed for the Stinespring dilation into $Q$.
Then we may express the entanglement fidelity as follows
\begin{align}
F 
&= \tr[\phi_{MM'}\cD_{BQ\to M'}\circ \cN_{A\to B}\circ \cE_{M'\to AQ}(\phi_{MM'})]\\
&=\tr[\cE_{M\to \bar A\bar Q}\otimes \cD_{BQ\to M'}^\dagger (\phi_{MM'})\cN_{A\to B}(\cE_{M'\to AQ}\otimes \cE_{M\to \bar A\bar Q}(\phi_{MM'}))]\,.
\end{align} 
Since $\cE$ is an isometry, the state $\rho_{A\bar AQ\bar Q}=\cE_{M'\to AQ}\otimes \cE_{M\to \bar A\bar Q}(\phi_{MM'})$ is pure, and therefore there exists an isometry $W_{A'\to \bar A Q\bar Q}$ such that $\ket{\rho}_{A\bar AQ\bar Q}=W_{A'\to \bar A Q\bar Q}\ket{\psi^{\rho}}_{AA'}$. Thus,
\begin{align}
F=\tr[W_{A'\to \bar A Q\bar Q}^\dagger (\cE_{M\to \bar A\bar Q}\otimes \cD_{BQ\to M'}^\dagger (\phi_{MM'}))W_{A'\to \bar A Q\bar Q} \cN_{A\to B}(\psi^{\rho}_{AA'})]\,.
\end{align}
Now consider the entanglement fidelity of any PPT$'$ state $\sigma_{A'B}$ instead of $\cN_{A\to B}(\psi^{\rho}_{AA'})$. By  \eqref{eq:rains-ineq} we have
\begin{align}\label{eq:protometa}
\tr[\phi_{MM'}(\cE_{M\to \bar A\bar Q}^\dagger\otimes \cD_{BQ\to M'}(W_{A'\to \bar A Q\bar Q} \sigma_{A'B} W_{A'\to \bar A Q\bar Q}^\dagger)) ]\leq \frac 1M\,,
\end{align}
as the operations on $\sigma_{A'B}$ are all PPT-preserving.
We may write this bound in terms of the hypothesis-testing relative entropy, because 
\begin{align}
\Lambda_{A'B}:=W_{A'\to \bar A Q\bar Q}^\dagger (\cE_{M\to \bar A\bar Q}\otimes \cD_{BQ\to M'}^\dagger (\phi_{MM'}))W_{A'\to \bar A Q\bar Q}
\end{align}
is a feasible test to discriminate between $\cN_{A\to B}(\psi^{\rho}_{A'A}))$ and $\sigma_{A'B}$. That is, $\Lambda_{A'B}$ satisfies $0\leq \Lambda_{A'B}\leq 1_{A'B}$ and $\tr[\Lambda_{A'B}\cN_{A\to B}(\psi^{\rho}_{A'A}))]\geq 1-\eps$, the former since $\cD$ is completely-positive and trace-preserving and $\cE$ and $W$ are isometries, the latter by assumption that $F\geq 1-\eps$. From \eqref{eq:protometa} we then obtain
\begin{align}
\hat{R}^{\mathrm{cpp}}(1;\eps) \leq D_{H}^{\eps}(\cN_{A\to B}(\psi^{\rho}_{A'A}))\,\|\, \sigma_{A'B})\,.
\end{align}
Since the bound holds for all PPT$'$ $\sigma_{A'B}$, we may take the infimum over this set to obtain
\begin{align}
\hat{R}^{\mathrm{cpp}}(1;\eps) \leq I_R^{\eps}(A': B)_{\cN_{A\to B}(\psi^{\rho}_{A'A})}\,.
\end{align}
This bound depends on the precise channel input $\rho_A\in\cS(A)$ used by the code, but we can remove the dependence by taking the supremum over all possible inputs. The result is \eqref{eq:meta_first}.
\end{proof}

Applied to the channel $\cN^{\otimes n}$ we immediately get for any fixed $\eps \in(0,1)$, 
\begin{align}
\hat{R}^{\mathrm{cpp}}(n;\eps)\leq  I_R^{\eps}\big(\cN^{\otimes n}\big)\,.
\end{align}
This bound forms a counterpart to Lemma~\ref{lm:morgan}, but suffers from the same weakness. It is generally hard to evaluate this bound even for moderately large $n$. However, we may relax the bound from Lemma~\ref{lem:metaconverse} to a convex optimization by restricting the form of the possible states $\sigma_{AB}$ in the definition of the hypothesis testing Rains relative entropy $I_R^{\eps}$.

\begin{corollary}\label{cor:relaxation}
Let $\mathcal{N}_{A\to B}$ be a quantum channel. We define the function
\begin{align}\label{eq:f_function}
f(\cN,\eps)&:=\inf_{\rho_{A}\in \cS(A)}\inf_{\Lambda_{AB}\in \Gamma(\rho_A,\cN,\eps)}\sup_{\cM_{A\to B}\in \text{PPT}} \tr[\Lambda_{AB}M_{AB}]\,,
\end{align}
with the set $\Gamma(\rho_A,\cN,\eps):=\{\Lambda_{AB}:0\leq \Lambda_{AB}\leq \rho_A^T\otimes 1_B,\tr[\Lambda_{AB}N_{AB}]\geq 1-\eps\}$, and the Choi states $M_{AB}$ of $\cM_{A\to B}$ and $N_{AB}$ of $\cN_{A\to B}$. Then, for any fixed $\eps\in(0,1)$, the achievable region satisfies 
\begin{align}
\hat{R}^{\mathrm{cpp}}(1;\eps) &\leq-\log f(\cN,\eps)\,.
\end{align}
\end{corollary}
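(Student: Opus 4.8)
The plan is to start from the single-letter metaconverse of Lemma~\ref{lem:metaconverse}, namely $\hat R^{\mathrm{cpp}}(1;\eps)\le I_R^\eps(\cN)=\sup_{\rho_A}\inf_{\sigma_{AB}\in\mathrm{PPT}'}D_H^\eps\big(\cN_{A'\to B}(\psi^\rho_{AA'})\,\big\|\,\sigma_{AB}\big)$, and to loosen it into a convex program by \emph{shrinking} the feasible set of the reference state $\sigma_{AB}$. Since replacing $\mathrm{PPT}'(A\!:\!B)$ by any subset $S\subseteq\mathrm{PPT}'$ can only increase the inner infimum, the resulting quantity still upper bounds $I_R^\eps(\cN)$ and hence $\hat R^{\mathrm{cpp}}(1;\eps)$. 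The subset to pick is $S(\rho):=\{\cM_{A'\to B}(\psi^\rho_{AA'}):\cM\in\mathrm{PPT}\}$, the outputs of PPT-preserving channels on the canonical purification of $\rho$. Because a channel with PPT Choi state maps \emph{any} input to a PPT output, each $\cM(\psi^\rho)$ has positive partial transpose and unit trace, so $S(\rho)\subseteq\mathrm{PPT}'$ as required; concretely $\cM(\psi^\rho)=(\sqrt{\rho}\otimes 1_B)M_{AB}(\sqrt{\rho}\otimes 1_B)$ with $M_{AB}$ the Choi state of $\cM$, and positivity of $T_B(M_{AB})$ is preserved under conjugation by $\sqrt{\rho}\otimes 1_B$.

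With this choice I would rewrite the inner infimum through the testing quantity $\beta_{1-\eps}$. For fixed $\rho$, $\inf_{\cM}D_H^\eps(\cN(\psi^\rho)\|\cM(\psi^\rho))=-\log\sup_{\cM}\beta_{1-\eps}(\cN(\psi^\rho)\|\cM(\psi^\rho))$, and expanding $\beta_{1-\eps}(\cN(\psi^\rho)\|\cM(\psi^\rho))=\min_{\Lambda'}\tr[\Lambda'\cM(\psi^\rho)]$ over tests $0\le\Lambda'\le1$ with $\tr[\Lambda'\cN(\psi^\rho)]\ge1-\eps$ produces a $\sup_{\cM}\min_{\Lambda'}$ expression. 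The substitution $\Lambda:=(\sqrt{\rho}\otimes1_B)\Lambda'(\sqrt{\rho}\otimes1_B)$ turns the objective into the Choi-picture pairing $\tr[\Lambda'\cM(\psi^\rho)]=\tr[\Lambda M_{AB}]$ and transports the test constraints into exactly $\Gamma(\rho,\cN,\eps)$, since $0\le\Lambda'\le1\Leftrightarrow0\le\Lambda\le\rho_A\otimes1_B$ (on $\mathrm{supp}\,\rho$, with the rank-deficient case handled by a standard limiting argument) and $\tr[\Lambda'\cN(\psi^\rho)]=\tr[\Lambda N_{AB}]$. The transpose in the stated bound, $\Lambda\le\rho_A^T\otimes1_B$, merely reflects the purification/Choi convention and is immaterial, because the relabelling $\rho\mapsto\rho^T$ is a bijection of $\cS(A)$ under the outer infimum. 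After interchanging $\sup_{\cM}$ and $\min_{\Lambda'}$, taking $\sup_\rho$ and using $\sup_\rho(-\log x_\rho)=-\log\inf_\rho x_\rho$ collapses the nested optimizations into $f(\cN,\eps)=\inf_\rho\inf_{\Lambda\in\Gamma}\sup_{\cM}\tr[\Lambda M_{AB}]$, yielding $\sup_\rho\inf_{S(\rho)}D_H^\eps=-\log f(\cN,\eps)$ and hence $\hat R^{\mathrm{cpp}}(1;\eps)\le-\log f(\cN,\eps)$.

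The crux, and the one step that must be done with care, is the exchange $\sup_{\cM\in\mathrm{PPT}}\min_{\Lambda'}=\min_{\Lambda'}\sup_{\cM\in\mathrm{PPT}}$, i.e.\ \emph{strong} duality rather than the free weak-duality inequality (which here points the wrong way, since $\sup\min\le\min\sup$ would invert the desired final inequality after the $-\log$). I would justify it with Sion's minimax theorem: the payoff $(\Lambda',\cM)\mapsto\tr[\Lambda M_{AB}]$ is bilinear, the test set $\{\Lambda':0\le\Lambda'\le1,\ \tr[\Lambda'\cN(\psi^\rho)]\ge1-\eps\}$ is convex and compact, and the set of Choi states of PPT channels $\{M_{AB}\ge0:T_B(M_{AB})\ge0,\ \tr_B M_{AB}=1_A\}$ is convex and compact, so both extrema are attained and coincide. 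The only other points needing attention are the membership $S(\rho)\subseteq\mathrm{PPT}'$ and the support bookkeeping in the change of variables, neither of which is deep. Finally, once the minimax is applied the inner $\sup_{\cM}\tr[\Lambda M_{AB}]$ is a linear program over a spectrahedron and the outer constraints in $(\rho,\Lambda)$ are jointly linear, so $f(\cN,\eps)$ is the claimed convex (semidefinite) optimization.
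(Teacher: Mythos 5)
Your proposal is correct and follows essentially the same route as the paper: relax the Rains set to outputs of PPT(-binding) channels on $\psi^{\rho}_{AA'}$, absorb $\sqrt{\rho}$ into the test operator to land in $\Gamma(\rho_A,\cN,\eps)$ with the Choi-state pairing $\tr[\Lambda_{AB}M_{AB}]$, and swap the inner $\sup_{\cM}$ and $\inf_{\Lambda}$ by a minimax theorem over the convex compact sets (the paper invokes von Neumann where you invoke Sion, which is immaterial). You correctly identify the minimax equality, not weak duality, as the crux, and your handling of the transpose convention and the rank-deficient case is consistent with the paper's.
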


\begin{proof}
Suppose that $\sigma_{AB}=(\mathcal{I}_{A}\otimes\cM_{A'\to B})(\psi^{\rho}_{AA'})$ for some PPT-preserving (PPT-binding) channel $\cM_{A\to B}$. Writing out the right-hand side of~\eqref{eq:meta_first} using this relaxation and notation gives
\begin{align}
\hat{R}^{\mathrm{cpp}}(1;\eps)\leq -\log\inf_{\rho_{A}\in \cS(A)}\sup_{\cM_{A\to B}\in \text{PPT}}\inf_{0 \leq \Lambda' \leq 1 \atop \tr[\Lambda'\rho]\geq1-\eps}\tr\big[\Lambda'_{AB}(\mathcal{I}_{A}\otimes\mathcal{M}_{A'\to B})(\psi^{\rho}_{AA'})\big]\,.
\end{align}
Now we may define $\Lambda_{AB}=(\rho_A^T)^{1/2}\Lambda'_{AB}(\rho_A^T)^{1/2}$ and find
\begin{align}\label{eq:proof_joe}
\hat{R}^{\mathrm{cpp}}(1;\eps)\leq-\log\inf_{\rho_{A}\in \cS(A)}\sup_{\cM_{A\to B}\in \text{PPT}}\inf_{\Lambda_{AB}\in \Gamma(\rho_A,\cN,\eps)}\tr[\Lambda_{AB}M_{AB}]\,.
\end{align}
Finally for fixed channel input $\rho_{A}$, we can reverse the order of the inner optimizations in~\eqref{eq:proof_joe} by von Neumann's minimax theorem, since the objective function is linear and the sets are both convex and compact. This concludes the proof.
\end{proof}

Furthermore, we show in Appendix~\ref{app:sdp} that $f(\cN,\eps)$ can be expressed as a semidefinite optimization program that satisfies strong duality.


\paragraph*{Group Covariant Channels:}

In the following we show that symmetries of the channel can further simplify the outer bounds. First let us state precisely what we mean by symmetries. Suppose $G$ is a group represented by unitary operators $U_g$ on $A$ and $V_g$ on $B$. A quantum channel $\cN_{A\to B}$ is covariant with respect to $G$ when  
\begin{align}
V_g \cN(\cdot)V_g^\dagger = \cN(U_g \cdot U_g^\dagger),\quad \forall g\in G\,.
\end{align}
Alternatively we can also write this as an invariance of the channel
\begin{align}
\cN(\cdot)= V_g^\dagger\cN(U_g \cdot U_g^\dagger)V_g,\quad \forall g\in G\,.
\end{align}
Now the main workhorse to simplify our outer bounds for channels with symmetries is~\cite[Prop.~2]{tomamichelww14}, which states that we may restrict the optimization in Lemma~\ref{lem:metaconverse} to covariant input states. Due to the form of the hypothesis testing Rains relative entropy, we may then also choose group covariant PPT$'$ states $\sigma$ and test operators $\Lambda$ to obtain the tightest bound. Note that the convex optimization outer bound in Corollary~\ref{cor:relaxation} inherits these symmetry simplifications.

For general tensor product channels, which are invariant to permutation of the inputs and outputs, this allows us to restrict attention to pure states that are permutation invariant. Moreover, if the channel is covariant, that is, covariant with respect to the full unitary group, then the channel input state can be chosen to be maximally mixed. 

Now let $\cN_{A\to B}$ be a covariant quantum channel and $\phi_{AA'}$ a maximally entangled state. Then, we bound
\begin{align}\label{eq:covariant_dephasing}
\hat{R}^{\mathrm{cpp}}(n;\eps) &\leq \min_{\sigma_{AB} \in \mathrm{PPT'}(A:B)} \frac{1}{n} D_H^{\eps} \big( \cN_{A'\to B}(\phi_{AA'})^{\otimes n} \big\| \sigma_{AB}^{\otimes n} \big)\,,
\end{align}  
where we voluntarily restricted the minimization to product states $\sigma_{AB}^{\otimes n}$ in PPT$'(A\!:\!B)$. Moreover, since these states have tensor product structure, the outer bound can be expanded using~\cite{tomamichel12,li12}
\begin{align}\label{eq:expand-hypo}
\frac{1}{n}D_H^{\eps}(\rho^{\otimes n}\|\sigma^{\otimes n})= D(\rho\|\sigma)+\sqrt{\frac{V(\rho\|\sigma)}{n}} \,\Phi^{-1}(\eps) + O\left(\frac{\log n}{n}\right)\,.
\end{align}
This leads to the following theorem.

\begin{theorem}\label{thm:covariant_outer}
  Let $\cN \equiv \cN_{A\to B}$ be a quantum channel. We define its Rains information as
  \begin{align}
    I_{R}(\cN) := \min_{\sigma_{AB} \in \mathrm{PPT}'(A:B)} D( \cN_{A'\to B}(\phi_{AA'}) \| \sigma_{AB}) \,.
  \end{align}
  and let $\Pi \subset \mathrm{PPT}'(A:B)$ be the set of states that achieve the minimum. The variance of the channel Rains information is
  \begin{align}
   V_{R}^{\eps}(\cN) := \begin{cases} \displaystyle \max_{\sigma_{AB} \in \Pi} V (\cN_{A'\to B}(\phi_{AA'}) \| \sigma_{AB}) & \mathrm{for} \quad \eps < \frac{1}{2} \\
   \displaystyle \min_{\sigma_{AB} \in \Pi} V (\cN_{A'\to B}(\phi_{AA'}) \| \sigma_{AB}) & \mathrm{for} \quad \eps \geq \frac{1}{2} \end{cases}\,.
   \end{align}
If $\cN$ is covariant, then for any fixed $\eps \in(0,1)$, the achievable region with cpp-assistance satisfies
\begin{align}
     \hat{R}^{\mathrm{cpp}}(n;\eps) &\leq \hat{R}_{\mathrm{outer}}^{\mathrm{cpp}}(n;\eps) , \quad \textrm{with} \quad  
     \hat{R}_{\mathrm{outer}}^{\mathrm{cpp}}(n;\eps) = I_{R}(\cN) + \sqrt{\frac{ V_R^{\eps}( \cN)}{n}} \, \Phi^{-1}(\eps) + O\bigg(\frac{\log n}{n}\bigg)\,.
\end{align}
\end{theorem}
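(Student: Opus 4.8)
The plan is to combine the single-letter covariant metaconverse already recorded in~\eqref{eq:covariant_dephasing} with the i.i.d.\ second-order expansion~\eqref{eq:expand-hypo}, and then to optimize the choice of reference state order by order. Abbreviate $\rho_{AB}:=\cN_{A'\to B}(\phi_{AA'})$ for the single-copy output state. Because the right-hand side of~\eqref{eq:covariant_dephasing} is a minimum over product references $\sigma_{AB}^{\otimes n}$, any single admissible choice already furnishes a valid upper bound on $\hat R^{\mathrm{cpp}}(n;\eps)$; the entire task is therefore to pick one state $\sigma^*_{AB}\in\mathrm{PPT}'(A:B)$ whose expansion reproduces the claimed first- and second-order terms. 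This is what makes the argument clean: for an upper bound we never need to solve the optimization exactly, only to exhibit a sufficiently good $\sigma^*$.

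First I would verify that the extrema defining $I_R(\cN)$ and $V_R^{\eps}(\cN)$ are attained. The Rains set $\mathrm{PPT}'(A:B)$ is closed and bounded, hence compact in $\cP(AB)$, and $\sigma\mapsto D(\rho\|\sigma)$ is lower semicontinuous, so the minimizer set $\Pi$ is nonempty and compact. Every $\sigma\in\Pi$ satisfies $\rho\ll\sigma$ (otherwise $D(\rho\|\sigma)=\infty$, precluding minimality since $1_{AB}/|AB|\in\mathrm{PPT}'(A:B)$ already gives $I_R(\cN)<\infty$), which makes $\sigma\mapsto V(\rho\|\sigma)$ well behaved on $\Pi$, so that its maximum and minimum over the compact set $\Pi$ are attained. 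This is precisely what renders $V_R^{\eps}(\cN)$ well defined.

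Next, fixing any $\sigma_{AB}\in\mathrm{PPT}'(A:B)$ with $\rho\ll\sigma$, the product form lets me invoke~\eqref{eq:expand-hypo},
\begin{align}
\frac1n D_H^{\eps}\big(\rho_{AB}^{\otimes n}\big\|\sigma_{AB}^{\otimes n}\big)=D(\rho\|\sigma)+\sqrt{\frac{V(\rho\|\sigma)}{n}}\,\Phi^{-1}(\eps)+O\Big(\frac{\log n}{n}\Big)\,,
\end{align}
where the remainder is genuinely $O(\log n/n)$ because $\sigma$ does not depend on $n$. I then take $\sigma^*_{AB}\in\Pi$, so the leading term equals $I_R(\cN)$ exactly, and I select among the minimizers according to the sign of $\Phi^{-1}(\eps)$: since $\Phi^{-1}(\eps)<0$ for $\eps<\tfrac12$ and $\Phi^{-1}(\eps)\ge0$ for $\eps\ge\tfrac12$, choosing $\sigma^*$ to maximize $V(\rho\|\sigma)$ over $\Pi$ in the first case and to minimize it in the second makes the second-order term as small as possible, and in either case equal to $\sqrt{V_R^{\eps}(\cN)/n}\,\Phi^{-1}(\eps)$. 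Substituting this $\sigma^*$ into~\eqref{eq:covariant_dephasing} then yields the asserted inequality.

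The step I expect to require the most care is the order-by-order optimization, i.e.\ justifying that nothing is lost by first committing to $\Pi$ and only afterwards extremizing the variance. For the upper bound this is automatic, since evaluating the minimum in~\eqref{eq:covariant_dephasing} at the single state $\sigma^*$ already suffices; the substantive point is merely that restricting to $\Pi$ costs nothing asymptotically, which follows from compactness: any $\sigma\notin\Pi$ obeys $D(\rho\|\sigma)\ge I_R(\cN)+\delta$ for some $\delta>0$, an $O(1)$ penalty that swamps the $O(1/\sqrt n)$ correction, so the bound is controlled by genuine minimizers once $n$ is large. The remaining delicate ingredients are the semicontinuity of $\sigma\mapsto V(\rho\|\sigma)$ needed to attain the extremum over $\Pi$, and the elementary sign bookkeeping of $\Phi^{-1}(\eps)$, which is exactly what produces the two cases in the definition of $V_R^{\eps}(\cN)$.
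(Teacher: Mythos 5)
Your proposal is correct and follows essentially the same route as the paper: start from the covariant metaconverse \eqref{eq:covariant_dephasing}, restrict to a single product reference $\sigma^{\otimes n}$ with $\sigma\in\Pi$, apply the i.i.d.\ expansion \eqref{eq:expand-hypo}, and select the variance extremizer according to the sign of $\Phi^{-1}(\eps)$. The additional remarks on compactness of $\mathrm{PPT}'(A:B)$ and attainment of the extrema are details the paper leaves implicit, but they do not change the argument.
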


Since we are here interested in outer bounds, we are also free to chose a potentially sub-optimal $\sigma_{AB} \in \textrm{PPT}'(A:B)$ to further relax this bound. As we will see in Section~\ref{sec:examples} for the qubit dephasing channel and the erasure channel with classical post-processing assistance the bound from Theorem~\ref{thm:covariant_outer} is tight up to the second order asymptotically.


\subsection{General Inner Bounds on the Achievable Region}\label{sec:achieve}

In this section we derive the general inner bound from Result~\ref{res:inner}, stated as Theorem~\ref{th:achieve-second} below. We use the decoupling approach~\cite{dupuis10,dupuis09,hayden08}, and in particular a one-shot bound by Morgan and Winter~\cite{morgan13} which is a tighter version of previous bounds~\cite{berta08,datta09}.

To reproduce their result we need the following additional notation. Sub-normalized quantum states are collected in the set $\cS_{\bullet}(A) := \{ \rho \in \cP(A) : \tr(\rho) \leq 1 \}$. The purified distance~\cite{tomamichel09} $\eps$-ball around $\rho\in\cS(A)$ is then defined as $\mathcal{B}^{\eps}(\rho):=\left\{\bar{\rho}\in\mathcal{S}_{\bullet}(\mathcal{H}) \,\middle|\, F(\bar{\rho},\rho) \geq (1-\eps)^2 \right\}$. Finally, for $\rho_{AB}\in\cS(AB)$ and $\eps\geq0$ the \emph{smooth conditional min-entropy}~\cite{renner05,tomamichel09} is defined as
\begin{align}
H_{\min}^{\eps}(A|B)_{\rho}:= \sup_{\bar{\rho}_{AB} \in \mathcal{B}^{\eps}(\rho_{AB})} \sup_{\sigma_{B} \in \cS(B)} \sup\left\{\lambda\in\mathbb{R} \,\middle|\, \bar{\rho}_{AB}\leq 2^{-\lambda} \cdot1_{A}\otimes\sigma_{B}\right\}\,.
\end{align}

Let us now restate Morgan and Winter's result expressed in terms of the non-asymptotic achievable region as introduced in Section~\ref{sec:results}.

\begin{lemma}\cite[Prop.~20]{morgan13}\label{lm:morgan}
Let $\mathcal{N}_{A\to B}$ be a quantum channel with complementary channel $\cN_{A\to E}^{c}$. Then $(R,1,\eps)$ is achievable if, for any $\eta \in (0,\eps]$ and any state $\rho_A \in \cS_{\circ}(A)$, we have
\begin{align}
R \leq H_{\min}^{\sqrt{\eps}-\eta}(A|E)_{\omega} - 4\log \frac{1}{\eta}\,,
\end{align}
where $\omega_{AE} = \big(\mathcal{I}_{A} \otimes \mathcal{N}_{A'\to E}^{c})(\psi_{AA'}^{\rho})$.
\end{lemma}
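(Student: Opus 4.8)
The plan is to establish achievability through the \emph{decoupling} method~\cite{dupuis10,dupuis09,hayden08}: entanglement can be sent reliably through $\cN_{A\to B}$ exactly when the channel environment $E$ can be decoupled from the reference holding the other half of the entangled state. I would fix the input $\rho_A$, pass to a Stinespring dilation $V_{A'\to BE}$ of $\cN_{A'\to B}$ so that $\cN^c_{A'\to E}=\tr_B\!\big(V(\cdot)V^\dagger\big)$, and note that $\omega_{AE}=(\mathcal{I}_A\otimes\cN^c_{A'\to E})(\psi^\rho_{AA'})$ is the joint state of the reference $A$ and the environment when the channel input is the $A'$-half of the purification $\psi^\rho_{AA'}$. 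The maximally entangled state $\phi_{MM'}$ with $\log|M|=R$ is then encoded into the input by a random isometry, and the resulting reference--environment state is a randomly encoded version of $\omega_{AE}$.

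The first step is decoupling. I would invoke a one-shot decoupling theorem to show that, \emph{averaged over the random encoding}, the reference--environment state is close in trace norm to the product $\phi_M\otimes\omega_E$, with an averaged decoupling error controlled by the gap $H_{\min}(A|E)_\omega-R$. To produce the \emph{smoothed} quantity $H_{\min}^{\sqrt{\eps}-\eta}(A|E)_\omega$ in the statement, I would first replace $\omega_{AE}$ by the optimal nearby state $\bar\omega_{AE}\in\mathcal{B}^{\sqrt{\eps}-\eta}(\omega_{AE})$ attaining the smooth min-entropy, run the decoupling estimate on $\bar\omega$, and charge the smoothing radius $\sqrt{\eps}-\eta$ as an additive error via the triangle inequality for the purified distance. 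The sharp shape of the bound---in particular the purely additive penalty $-4\log\frac{1}{\eta}$ rather than a multiplicative loss---is exactly where the Morgan--Winter refinement of the Dupuis-type estimate is needed: their one-shot decoupling inequality trades a smoothing parameter $\eta$ for the clean logarithmic correction.

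The second step converts decoupling into a decoder. Since the global state on $MBE$ is pure and its marginal on $ME$ is $\delta$-close to $\phi_M\otimes\omega_E$, whose purification factorizes across $B=B_1B_2$ as a purification of $\phi_M$ on $MB_1$ tensored with a purification of $\omega_E$ on $B_2E$, Uhlmann's theorem supplies an isometry on $B$ mapping the true purification onto this product purification up to an error set by $\delta$. Composing it with a partial trace yields the decoder $\cD_{B\to M''}$, and tracing out $E$ gives $\rho_{MM''}$ with $F(\phi_{MM'},\rho_{MM''})\ge 1-\eps$. The square root in $\sqrt{\eps}$ is precisely the Fuchs--van de Graaf/Uhlmann passage from a trace-distance decoupling error to a fidelity guarantee, so smoothing the min-entropy at radius $\sqrt{\eps}-\eta$ is what makes the final fidelity error land at $\eps$. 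Finally, since the whole estimate holds on average over the random encoding, at least one deterministic encoder meets it, establishing that $(R,1,\eps)$ is achievable.

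The main obstacle is managing the error budget so that the three contributions---the smoothing radius $\sqrt{\eps}-\eta$, the one-shot decoupling estimate, and the Uhlmann conversion with its square-root loss---combine to give \emph{exactly} fidelity $1-\eps$ together with the clean correction $-4\log\frac{1}{\eta}$. This is the precise content of the Morgan--Winter one-shot bound, and the two delicate points are (i) using a decoupling inequality sharp enough that the sole penalty is the additive $4\log\frac{1}{\eta}$, and (ii) aligning the purified-distance smoothing ball with the square-root loss in Uhlmann's theorem so that no stray constants survive. The remaining ingredients---the Stinespring dilation, the random-encoding and derandomization argument, and the factorization of the product purification---are routine.
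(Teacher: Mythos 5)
The paper does not actually prove this lemma: it is imported wholesale from Morgan and Winter~\cite{morgan13} (their Prop.~20), and the only original content in the surrounding text is the remark that their purified-distance figure of merit must be translated into the fidelity criterion of~\eqref{eq:coding_def} --- which is what produces the $\sqrt{\eps}$ in the smoothing parameter --- together with the observation that the $n=1$ statement applies to $\cN^{\otimes n}$ viewed as a single channel. So there is no in-paper proof to match your argument against; the fair comparison is against the proof in~\cite{morgan13} itself, and there your roadmap (Stinespring dilation, random encoding, one-shot decoupling of the reference from the environment, Uhlmann to extract a decoder, derandomization) is indeed the correct architecture.

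That said, as a proof your proposal is circular at precisely the decisive step. The entire quantitative content of the lemma is the claim that smoothing at radius $\sqrt{\eps}-\eta$ plus an additive penalty of $4\log\frac{1}{\eta}$ suffices, and your text justifies this by appeal to ``the Morgan--Winter one-shot decoupling inequality,'' i.e.\ to the statement being proved. To close the argument you would need to actually run the estimate: apply a non-smoothed decoupling bound (with its $2^{-\frac{1}{2}(H_{\min}(A|E)_{\bar\omega}-R)}$-type error) to the optimizer $\bar\omega$ in the smoothing ball, track how the smoothing radius and the decoupling error add in purified distance, optimize the split to get the $\eta$-dependence, and only then convert to fidelity. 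On that last point, one detail in your sketch is slightly off: the clean $\sqrt{\eps}-\eta$ does not come from a Fuchs--van de Graaf conversion between trace distance and fidelity (which would cost extra constants and factors of $2$), but from the exact identity relating the purified distance to the fidelity, $F\geq 1-\eps \Leftrightarrow P\leq\sqrt{\eps}$, applied because the whole Morgan--Winter argument is carried out in purified distance from start to finish. Keeping the analysis in purified distance throughout is exactly what lets the error budget close with no stray constants.
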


Note that Morgan and Winter use the purified distance as their figure of merit whereas we use the fidelity criterion~\eqref{eq:coding_def}. This accounts for the square root in the smoothing parameter of the conditional min-entropy. Also Morgan and Winter state their result for the special case $n = 1$, but this can be generalized to arbitrary $n \in \mathbb{N}$ if we simply consider $\cN_{A\to B}^{\otimes n}$ as a single channel. This leads immediately to the following inner bound on the achievable region:

\begin{corollary}
  Using the notation of Lemma~\ref{lm:morgan} with $\omega_{A^nE^n} = \big(\mathcal{I}_{A^n} \otimes (\mathcal{N}_{A'\to E}^{c})^{\otimes n} \big)(\psi_{A^nA^{'n}}^{\rho})$, we have
  \begin{align}
  \hat{R}(n;\eps) \geq \sup_{\eta \in (0,\eps)} \sup_{\rho_{A^n} \in \cS(A^n)} 
    \frac{1}{n} \big( H_{\min}^{\sqrt{\eps}-\eta}(A^n|E^n)_{\omega}   -4 \log \frac{1}{\eta} -1 \big)\,.
  \end{align}
\end{corollary}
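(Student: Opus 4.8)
The plan is to apply Lemma~\ref{lm:morgan} directly, but with the single channel $\cN_{A\to B}$ replaced by its $n$-fold parallel repetition $\cN^{\otimes n}$, treated as one channel from $A^n$ to $B^n$. As noted in the remark following that lemma, the Morgan--Winter one-shot bound is stated for $n=1$ yet holds verbatim for any channel, so I may invoke it for $\cN^{\otimes n}$. The only structural fact I need is that the complementary channel of $\cN^{\otimes n}$ is $(\cN^c)^{\otimes n}$; this makes the relevant state precisely $\omega_{A^nE^n} = (\mathcal{I}_{A^n}\otimes(\cN^c)^{\otimes n})(\psi^{\rho}_{A^nA'^n})$ as written in the statement. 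With this identification, the lemma guarantees a code transmitting a maximally entangled state of local dimension $|M|$ at fidelity at least $1-\eps$ whenever $\log|M| \leq H_{\min}^{\sqrt\eps-\eta}(A^n|E^n)_{\omega} - 4\log\frac1\eta =: X$, for any admissible $\eta$ and input $\rho_{A^n}$.

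The subtlety is that $|M|$ must be a positive integer while $X$ is real, and that the figure of merit in our definition~\eqref{eq:coding_def} is the per-use rate $R = \frac1n\log|M|$ rather than $\log|M|$ itself. First I would pick the largest integer $|M|$ with $\log|M|\leq X$, namely $|M|=\lfloor 2^X\rfloor$. Assuming $X\geq 1$ (so $2^X\geq 2$), this yields $\log|M| = \log\lfloor 2^X\rfloor \geq \log(2^X-1) \geq \log 2^{X-1} = X-1$, where the middle inequality uses $\lfloor 2^X\rfloor \geq 2^X-1$ and the last uses $2^X-1\geq 2^{X-1}$ for $X\geq 1$. By Lemma~\ref{lm:morgan} this code achieves fidelity $\geq 1-\eps$ over $\cN^{\otimes n}$, i.e. over $n$ uses of $\cN$, so the triplet $(R,n,\eps)$ with $R = \frac1n(X-1)\leq\frac1n\log|M|$ is achievable. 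Taking the maximum over achievable rates and then optimizing the free parameters $\eta\in(0,\eps)$ and $\rho_{A^n}\in\cS(A^n)$ on the right-hand side gives the claimed bound.

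The degenerate case $X<1$ is immediate: then $\frac1n(X-1)<0$, whereas the trivial code with $|M|=1$ always achieves rate $0$ at fidelity $1$, so $\hat{R}(n;\eps)\geq 0 > \frac1n(X-1)$ and the inequality holds a fortiori. The only genuine ingredient beyond quoting Lemma~\ref{lm:morgan} is thus the integer-rounding step, which is exactly what produces the additive $-1$ in the stated bound. I expect this bookkeeping, together with the careful translation between the single-shot dimension $\log|M|$ for $\cN^{\otimes n}$ and the per-use rate $R$, to be the main point requiring attention, since everything else is a direct specialization of the Morgan--Winter achievability result to the product channel.
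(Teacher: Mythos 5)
Your proposal is correct and follows exactly the route the paper takes: it applies the Morgan--Winter bound of Lemma~\ref{lm:morgan} to $\cN^{\otimes n}$ viewed as a single channel (with $(\cN^{c})^{\otimes n}$ as its complement) and then optimizes over $\eta$ and the input state, which is precisely what the paper means by the corollary following ``immediately.'' Your explicit integer-rounding argument accounting for the additive $-1$ is a detail the paper leaves implicit, and it is handled correctly.
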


The problem with this bound is that it is generally hard to evaluate, even for moderately large values of $n$. Hence we are interested to further simplify the expression on the right-hand side in this regime. To do so, we choose $\eta = 1/{\sqrt{n}}$ and use input states of the form $\rho_A^{\otimes n}$. This yields the following relaxation, which holds if $n > \frac{1}{\eps}$:
\begin{align}\label{eq:achieve1}
  \hat{R}(n;\eps) \geq  
  \sup_{\rho_A \in \cS(A)} 
    \frac{1}{n} \big( H_{\min}^{\eps_n}(A^n|E^n)_{\omega^{\otimes n}}  -2 \log n -1 \big) \,.
\end{align}
Here we introduced $\eps_n = \sqrt{\eps}-\frac{1}{\sqrt{n}}$ and $\omega_{AE}$ as in Lemma~\ref{lm:morgan}. Using standard second order expansion methods~\cite{tomamichel12}, we can give an asymptotic expansion of $R_{\mathrm{inner}}^*(n;\eps)$ in~\eqref{eq:achieve1} as follows.

\begin{theorem}\label{th:achieve-second}
  Let $\cN \equiv \mathcal{N}_{A\to B}$ be a quantum channel. We define its coherent information as
  \begin{align}
I_{c}(\cN) := \max_{\rho_A \in \cS(A)} I(A \rangle B)_{\omega}\,, \quad \textrm{with} \quad  
\omega_{AB} = \big(\mathcal{I}_{A} \otimes \mathcal{N}_{A'\to B})(\psi_{AA'}^{\rho})
  \end{align}
  and let $\Pi \subset \cS_{\circ}(A)$ be the set of states that achieve the maximum. Define
  \begin{align}
    V^{\eps}_{c}(\cN) :=
\begin{cases}
 \displaystyle \min_{\rho_A \in \Pi} V(A \rangle B)_{\omega} & \mathrm{for} \quad \eps < \frac{1}{2} \\
\displaystyle \max_{\rho_A \in \Pi} V(A \rangle B)_{\omega} & \mathrm{for} \quad \eps \geq \frac{1}{2} 
\end{cases}\,.
  \end{align}
  Then, for any fixed $\eps\in(0,1)$, the achievable region satisfies 
  \begin{align}
    \hat{R}(n;\eps) \geq \hat{R}_{\mathrm{inner}}(n;\eps) , \quad \textrm{with} \quad  
    \hat{R}_{\mathrm{inner}}(n;\eps) = I_{c}(\cN) + \sqrt{\frac{V_{c}^{\eps}(\cN)}{n}}\,\Phi^{-1}(\eps) + O\left(\frac{\log n}{n} \right)\,.
  \end{align}
\end{theorem}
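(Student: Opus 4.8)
The plan is to start from the computable one-shot inner bound in~\eqref{eq:achieve1}, which already reduces the problem to asymptotically expanding the smooth conditional min-entropy of a tensor power state. Concretely, I would fix an optimal input $\rho_A \in \Pi$ achieving the coherent information $I_c(\cN)$, set $\omega_{AE} = (\mathcal{I}_A \otimes \cN^c_{A'\to E})(\psi^\rho_{AA'})$, and apply the second-order asymptotic expansion of the smooth min-entropy from~\cite{tomamichel12}. That expansion states that, for a fixed state $\omega_{AE}$ and smoothing parameter $\delta$,
\begin{align}\label{eq:hmin-expand}
H_{\min}^{\delta}(A^n|E^n)_{\omega^{\otimes n}} = n H(A|E)_\omega - \sqrt{n\, V(A|E)_\omega}\,\Phi^{-1}(\delta^2) + O(\log n)\,,
\end{align}
where $H(A|E)_\omega = -I(A\rangle B)_\omega$ by duality of the conditional entropy for the pure state $\psi^\rho$, and likewise $V(A|E)_\omega = V(A\rangle B)_\omega$. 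Dividing~\eqref{eq:hmin-expand} by $n$ and matching it against~\eqref{eq:achieve1} is what produces the claimed first- and second-order terms.

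The key bookkeeping steps I would carry out in order are the following. First, I would translate the smoothing parameter: in~\eqref{eq:achieve1} the min-entropy is smoothed at $\eps_n = \sqrt{\eps} - 1/\sqrt{n}$, so the argument of $\Phi^{-1}$ in~\eqref{eq:hmin-expand} is $\eps_n^2 = \eps - 2\sqrt{\eps/n} + 1/n$; a first-order Taylor expansion of $\Phi^{-1}$ around $\eps$ shows this perturbation only affects the $O(\log n / n)$ remainder, so it may be replaced by $\Phi^{-1}(\eps)$ at the order we track. Second, I would invoke the duality $H(A|E)_\omega = -H(A|B)_\omega = I(A\rangle B)_\omega$ valid because $\psi^\rho_{AA'}$ is pure and $\cN^c$ is the complementary channel, thereby turning the entropic leading term into $+I_c(\cN)$ after maximizing over $\rho_A$. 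Third, I would absorb the explicit $-2\log n - 1$ correction in~\eqref{eq:achieve1} and the $O(\log n)$ remainder of~\eqref{eq:hmin-expand} into the overall $O(\log n / n)$ term after dividing by $n$.

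The one genuinely delicate step is the choice of the variance-optimizing input when $\Pi$ is not a singleton. Since we are free to pick any $\rho_A \in \Pi$ and take the supremum over inputs in~\eqref{eq:achieve1}, we should select the element of $\Pi$ that makes the second-order term as large as possible. For $\eps < 1/2$ one has $\Phi^{-1}(\eps) < 0$, so the second-order contribution $\sqrt{V_c^\eps(\cN)/n}\,\Phi^{-1}(\eps)$ is maximized by taking the \emph{smallest} variance over $\Pi$; for $\eps \geq 1/2$ the sign flips and we want the \emph{largest} variance. This is exactly the case distinction in the definition of $V_c^\eps(\cN)$, and making it rigorous requires knowing that each $\rho_A \in \Pi$ individually yields the leading term $I_c(\cN)$ so that the optimization genuinely decouples into choosing the variance at fixed first order. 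I would argue this by noting that the supremum over all $\rho_A$ dominates the supremum restricted to $\Pi$, and that restricting to $\Pi$ already fixes the first-order term, so only the second-order term is at stake.

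I anticipate the main obstacle to be the uniformity of the remainder term in~\eqref{eq:hmin-expand} across the family of states indexed by $\rho_A$. The expansion in~\cite{tomamichel12} is stated for a fixed state, with an $O(\log n)$ constant that in principle depends on the spectrum of $\omega$; since we only ever evaluate it at finitely many optimal inputs (indeed at a single chosen $\rho_A \in \Pi$), this dependence is harmless and the constant can simply be absorbed. The remaining care is to ensure that the perturbation of the smoothing parameter from $\eps$ to $\eps_n^2$ and the additive $-2\log n$ term are both genuinely of order $\log n / n$ after normalization, which follows from the differentiability of $\Phi^{-1}$ on $(0,1)$ and requires only that $\eps \in (0,1)$ is bounded away from the endpoints, as assumed.
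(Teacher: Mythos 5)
Your proposal follows essentially the same route as the paper: start from the relaxation~\eqref{eq:achieve1} of the Morgan--Winter decoupling bound with i.i.d.\ inputs and $\eta=1/\sqrt{n}$, apply the Tomamichel--Hayashi second-order expansion of the smooth conditional min-entropy, pass from $E$ to $B$ by duality ($H(A|E)_\omega=I(A\rangle B)_\omega$ and $V(A|E)_\omega=V(A\rangle B)_\omega$), and resolve the choice of $\rho_A\in\Pi$ by the sign of $\Phi^{-1}(\eps)$. Your explicit treatment of the smoothing-parameter shift from $\eps$ to $\eps_n^2=\eps-2\sqrt{\eps/n}+1/n$ via differentiability of $\Phi^{-1}$, and your remark on the state-dependence of the $O(\log n)$ constant, are points the paper's proof passes over silently, and both are handled correctly. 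The one concrete error is the sign of the dispersion term in your displayed expansion~\eqref{eq:hmin-expand}: with $\Phi$ the standard Gaussian distribution function as defined in the paper, the correct statement (the paper's~\eqref{eq:expand-min}) is $H_{\min}^{\delta}(A^n|E^n)_{\omega^{\otimes n}} = nH(A|E)_\omega + \sqrt{n V(A|E)_\omega}\,\Phi^{-1}(\delta^2)+O(\log n)$, i.e.\ with a plus sign (equivalently $-\sqrt{nV}\,\Phi^{-1}(1-\delta^2)$). Taken literally, your version would produce a second-order term $-\sqrt{V/n}\,\Phi^{-1}(\eps)$, which is \emph{positive} for $\eps<\tfrac12$ and would falsely assert achievability above the coherent information; it also contradicts your own (correct) discussion of the variance optimization in the third paragraph, which uses $+\sqrt{V_c^{\eps}/n}\,\Phi^{-1}(\eps)$ and correctly concludes that one minimizes the variance over $\Pi$ for $\eps<\tfrac12$ and maximizes it for $\eps\geq\tfrac12$. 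Once that sign is fixed the argument goes through and matches the paper's proof.
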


\begin{proof}
We analyze the expression in~\eqref{eq:achieve1} using the following asymptotic expansion of the smooth conditional min-entropy~\cite{tomamichel12},
\begin{align}\label{eq:expand-min}
\frac{1}{n}H_{\min}^{\eps}(A|B)_{\rho^{\otimes n}}= H(A|B)_{\rho}+\sqrt{\frac{V(A|B)_{\rho}}{n}} \,\Phi^{-1}\big(\eps^2 \big) +O\left(\frac{\log n}{n}\right)\,.
\end{align}
This yields that for any $\rho_A \in \cS(A)$, we have
\begin{align}
\hat{R}(n;\eps) \geq  H(A|E)_{\omega} + \sqrt{\frac{V(A|E)_{\omega}}{n}}\, \Phi^{-1}(\eps) + O\left(\frac{\log n}{n}\right)\,,
\end{align}
and then by duality of the conditional entropy we find $H(A|E)_{\omega} = I(A\rangle B)_{\omega}$. Furthermore, it is easy to verify that $V(A\rangle E)_{\omega} = V(A\rangle B)_{\omega}$ (see, e.g.,~\cite{hayashitomamichel14}). We conclude the proof by choosing an optimal state $\rho_A \in \Pi$ depending on the sign of $\Phi^{-1}(\eps)$.
\end{proof}

As we will see in Section~\ref{sec:examples} for the qubit dephasing channel the bound in Theorem~\ref{th:achieve-second} agrees with the outer bound stated in Result~\ref{res:outer} up to the second order asymptotically.


\section{Proofs: Examples}\label{sec:examples}

In this section we derive our results concerning the qubit dephasing channel, the erasure channel, and the qubit depolarizing channel as announced in Results~\ref{res:z},~\ref{res:e}, and~\ref{res:depol}.


\subsection{Covariant Generalized Dephasing Channels} 

First we consider covariant generalized dephasing channels, which have the (additional) property that $\cZ_{A'\to B}(\psi_{AA'})$ has full support on the projector $\Pi = \sum_x |x\rangle\!\langle x|_A \otimes |x\rangle\!\langle x|_B$ in some basis. In that case, starting from~\eqref{eq:covariant_dephasing}, we can use the data-processing inequality for a map $\cE(\cdot) = \Pi \cdot \Pi + (1-\Pi) \cdot (1-\Pi)$ to write
\begin{align}
\hat{R}^{\mathrm{cpp}}(n;\eps)
&\leq\min_{\sigma_{AB} \in \textrm{PPT}'(A:B)} \frac{1}{n} D_H^{\eps} \big( (\cZ_{A'\to B}(\phi_{AA'}))^{\otimes n} \big\| \sigma_{AB}^{\otimes n}\big)\\
&\leq \min_{\sigma_{B} \in \cS(B)} \frac{1}{n}D_H^{\eps} \big( (\cZ_{A'\to B}(\phi_{AA'}))^{\otimes n} \big\| 1_{A^n} \otimes \sigma_{B}^{\otimes n}\big) \\
&\leq \frac{1}{n}D_H^{\eps} \big( (\cZ_{A'\to B}(\phi_{AA'}))^{\otimes n} \big\| (1_{A} \otimes \cZ_{A'\to B}(\phi_{A'}))^{\otimes n}\big)\,. \label{eq:lastone}
\end{align}
To show the second inequality we apply $\cE^{\otimes n}$ to both states and employ the data-processing inequality for the hypothesis testing relative entropy. Note that the map $\cE$ keeps $\cZ(\phi_{AA'})$ invariant and maps $1_{A} \otimes \sigma_B$ to a normalized state $\sigma_{AB}$ that is classically correlated and thus in particular in PPT$'(A:B)$. The bound in~\eqref{eq:lastone} has the form of a conditional entropy or coherent information, and it can be expanded again using~\eqref{eq:expand-hypo} to find
\begin{align}\label{eq:cov_deph}
\hat{R}^{\mathrm{cpp}}(n;\eps)\leq I(A \rangle B)_{\cZ(\phi)} + \sqrt{\frac{ V(A \rangle B)_{\cZ(\phi)}}{n}}\, \Phi^{-1}(\eps)+ O\left(\frac{\log n}{n}\right)\,.
\end{align}
So for covariant generalized dephasing channels this now agrees with the inner bound from Theorem~\ref{th:achieve-second} up to the second order asymptotically (an example being the qubit dephasing channel as discussed in Section~\ref{sec:dephasing}). Given the recent results~\cite{tomamichelww14} the outer bound~\eqref{eq:cov_deph} might also hold for generalized dephasing channels (without assuming covariance) and then agree with the inner bound from Theorem~\ref{th:achieve-second} up to the second order asymptotically.


\subsection{The Qubit Dephasing Channel}\label{sec:dephasing}

The qubit dephasing channel is defined as
\begin{align}
\cZ_{\gamma}: \rho \mapsto (1-\gamma)\rho + \gamma Z \rho Z \,,
\end{align}
where $\gamma \in [0,1]$ is a parameter and $Z$ is the Pauli $Z$ operator. This channel is covariant since it is a qubit Pauli channel. Now to determine the second order asymptotic performance it is sufficient to specialize the outer bound from~\eqref{eq:cov_deph} and to apply Theorem~\ref{th:achieve-second} for the inner bound. It is then easily seen that
\begin{align}
I(R \rangle B)_{\cZ_\gamma(\psi)} &=1-h(\gamma)\quad\mathrm{with}\quad h(\gamma)=1-\gamma \log\gamma-(1-\gamma)\log(1-\gamma)\\
V(R \rangle B)_{\cZ_\gamma(\psi)} &=v(\gamma)\quad\mathrm{with}\quad v(\gamma)=\gamma (\log \gamma + h(\gamma))^2 + (1-\gamma) (\log (1-\gamma) + h(\gamma) )^2\,,
\end{align}
and hence we deduce
\begin{align}\label{eq:2nddephasing}
\hat{R}(n;\eps) = 1-h(\gamma) + \sqrt{\frac{v(\gamma)}{n}}\, \Phi^{-1}(\eps) + O\left(\frac{\log n}{n}\right)\,.
\end{align}

However, we can refine~\eqref{eq:2nddephasing} and determine the third order asymptotic performance. We do this by directly obtaining the finite block length behavior of the qubit dephasing channel from that of the classical binary symmetric channel (BSC). First, consider the converse, particularly that of~\eqref{eq:covariant_dephasing}, applied to the channel $\cZ_\gamma^{\otimes n}$. Using the Bell states $\phi_{AB}^+=\phi_{AA'}$ and $\phi_{AB}^-=(1_{A}\otimes Z_{A'})\phi_{AA'}(1_{A}\otimes Z_{A'})$, we immediately find 
\begin{align}
\omega_{AB}:=\cN_{A'\to B}(\phi_{AA'})=(1-\gamma)\phi_{AB}^++\gamma\phi_{AB}^-\,.
\end{align}
Now, in~\eqref{eq:covariant_dephasing} we are free to pick any PPT$'$ state to obtain a bound. Pick $\sigma_{AB}=\tfrac12(\phi_{AB}^++\phi_{AB}^-)$, which gives\footnote{The choice of $\sigma_{AB}$ is equivalent to using the convex relaxation of the bound, Corollary~\ref{cor:relaxation}, and choosing $\cM=\cZ_{1/2}$ in~\eqref{eq:f_function}.}
\begin{align}
\hat{R}(n;\eps)\leq \hat{R}^{\mathrm{cpp}}(n;\eps)\leq \frac{1}{n} D_H^{\eps} \big( \omega_{AB}^{\otimes n} \big\| \sigma_{AB}^{\otimes n} \big)\,.
\end{align}
To connect to the finite block length bounds of the BSC, consider measuring both $A$ and $B$ in the Pauli $x$ basis, and let $X$ and $Y$ be the output random variables for $A$ and $B$, respectively. For the state $\omega_{AB}$, this results in the distribution $P_{XY}$ in which $P_X$ is uniformly-distributed and $P[Y=X]=1-\gamma$. For $\sigma_{AB}$, the distribution is of product form $P_XQ_Y$ with $Q_Y$ also uniform. Moreover, the original quantum states can be reconstructed from the classical random variables $X$ and $Y$ by the map which outputs $\phi_{AB}^+$ when $X=Y$ and $\phi_{AB}^-$ otherwise. Therefore, the bound becomes
\begin{align}
\hat{R}(n;\eps)\leq\frac{1}{n} D_H^{\eps} \big( P_{XY}^{\times n}\|P_X^{\times n}\times Q_Y^{\times n} \big)\,,
\end{align}
which is precisely the bound obtained by Polyanskiy {\it et al.} for the BSC~\cite[Theorem 26]{polyanskiy10}, which is equivalent to the classical sphere-packing bound~\cite[Eq.\ 5.8.19]{gallager68}. This establishes one inequality (upper bound) in~\eqref{eq:res_z}.

For the achievability, we may directly employ linear codes for the classical BSC to the qubit dephasing channel. Specifically, any linear $\{R,n,\eps\}$ code for the BSC (which recovers the input with probability at least $1-\eps$, averaged over a uniform choice of inputs), can be converted into an $\{R,n,\eps\}$ Calderbank-Shor-Steane (CSS) code for entanglement transmission over the dephasing channel.
This is possible since, for a linear code, the action of the channel is a mapping among the orthogonal Bell states, which is essentially a classical action. 

To formalize the connection, begin with the description of the classical linear code by its $(\log n-\log|M|)\times n$ parity check matrix $H$. Each row $r_j\in \{0,1\}^n$ defines a parity function and the codewords $c_k$ of the code must satisfy $c_k\cdot r_j=0$ for all $j$. 
The associated CSS code can be defined as the simultaneous $+1$ eigenspace of the ``stabilizer'' operators $X^{r_j}$, where $X^{r_j}=X^{r_{j,1}}\otimes \cdots \otimes X^{r_{j,n}}$.\footnote{
Generically, a CSS code has stabilizers of both $X$-type, as here, and of $Z$-type, i.e.\ composed of products of Pauli $Z$ operators.} Crucially, the action of the channel is to apply an operator of the form $Z^{u}$, with $u\in\{0,1\}^n$, according to the distribution $P_U$. At the output, the receiver can simultaneously determine the eigenvalues of all the of the stabilizer operators. This information is precisely equivalent to determining the value of the parity checks of the classical linear code, called the syndrome $s$. Given the syndrome, the decoder of the classical code determines a guess as to the input codeword, which is equivalent to a guess $u'(s)$ of the actual channel error. 

We may also utilize this algorithm (whatever its precise details) in the quantum case, and attempt to correct the error by applying $Z^{u'(s)}$. When $u'(s)$ is the true error pattern, the quantum state is properly recovered, and the entanglement fidelity is unity. On the other hand, if $u'(s)$ is incorrect, then in the worst case the action $Z^{u'(s)+u}$ is a logical operation on the code subspace, which results in a state orthogonal to the desired entangled state. Therefore, the error probability of the classical code translates directly into the entanglement fidelity of the quantum code. Thus, we may apply finite-block length bounds for linear codes, particularly the bound of Poltyrev~\cite{poltyrev94} (see also~\cite[Eq.\ 65]{polyanskiy10}). This establishes the other inequality (lower bound) in~\eqref{eq:res_z}.


\subsection{The Erasure Channel}

The qubit erasure channel is defined as
\begin{align}
\cE_{\beta}:\rho\mapsto(1-\beta)\rho+\beta\proj{e}\,,
\end{align}
where $\beta \in [0,1]$ is a parameter and $\proj{e}$ is a quantum state orthogonal to $\rho$. Using the covariance of the channel, we could first obtain a second order asymptotic similar to that of the dephasing channel in~\eqref{eq:2nddephasing}. However, it is not too difficult to directly derive an outer bound and an explicit coding scheme leading to an inner bound, which precisely match for all $n$.

Let us begin with the outer bound. Again we may relate the finite block length performance to a classical coding problem, namely the classical binary erasure channel (BEC). The argument for the outer bound proceeds very similarly to the dephasing example. The optimal channel input state corresponds to the maximally entangled state $\phi_{AA'}$, and the state produced by the channel is now 
\begin{align}
\omega_{AB}=(1-\beta)\phi_{AB}+\beta \pi_{A}\otimes\proj e_B\,,
\end{align}
where $\pi_{A}$ denotes the maximally-mixed state. Measurement of $A$ in the Pauli $x$ basis and $B$ in the basis $\{\ket{+},\ket{-},\ket{e}\}$ produces the distribution $P_{XY}$ with $P_X$ uniform and $Y=X$ with probability $1-\beta$ and $Y=e$ with probability $1-\beta$. The original state can be reconstructed using the map which sends $(X,Y)$ to $\phi_{AB}^+$ when $X=Y$, $\phi_{AB}^-$ when $X\neq Y\neq e$, and to $\pi_A\otimes\proj e_B$ when $Y=e$ otherwise. As before, we make a specific choice of PPT$'$ state in~\eqref{eq:covariant_dephasing}, but this time not a product state. Instead, consider the classical distribution $P_{X}^{\times n}\times Q_{Y^n}$ given in~\cite[Eq.\ 168]{polyanskiy10}. The $Q_{Y^n}$ distribution has the property that any two $y^n$ with the same number of erasure symbols $e$ have the same probability, i.e.\ there is no dependence on the number of $0$s versus $1$s. The aforementioned map takes the distribution to a quantum state which is diagonal in the standard bases $\{\ket{0},\ket{1}\}$ for $A$ and $\{\ket{0},\ket{1},\ket{e}\}$ for $B$, and is therefore a PPT state. This can be seen as follows. Consider a fixed position $j$ in a given a pair $(x^n,y^n)$. If $y_j=e$, the state of the $j$th pair of systems $AB$ is manifestly diagonal in the standard basis. On the other hand, if $y_j\neq e$, then the state is mapped to either $\phi_{AB}^+$ or $\phi_{AB}^-$ depending on the value of $x_j$. But the sequence in which $y_j$ takes the other value has identical probability, meaning the two Bell states occur with equal probability, making the $AB$ state diagonal. Since we may map $\omega_{AB}^{\otimes n}$ and $\sigma_{A^nB^n}$ to the associated classical distributions and back, the following converse holds for the qubit erasure channel,
\begin{align}\label{eq:outerbec}
\hat{R}^{\mathrm{cpp}}(n;\eps)\leq \frac{1}{n} D_H^{\eps} \big( P_{XY}^{\times n} \big\| P_X^{\times n}\times Q_{Y^n} \big)\,.
\end{align}
By design in the choice of $\sigma_{AB}$,\footnote{This corresponds to using Corollary~\ref{cor:relaxation} with $\cM$ the channel which ignores its input and prepares $\sigma_{B^n}$ at the output.} this is precisely the bound for the BEC reported by Polyanskiy {\it et al.}~\cite[Thm.~38]{polyanskiy10}, as discussed in more detail by Polyanskiy~\cite{polyanskiy13}.

Next, we construct an explicit coding scheme, involving classical post-processing including communication from the receiver to the sender, which matches the outer bound exactly. (See Figure~\ref{fig:scheme-cpp} for schematic description of this code.) The strategy of the coding scheme is to generate maximally entangled qubit states using the quantum channels and then use the successfully transmitted (i.e.\ not erased) maximally entangled qubit states to distill a an entangled state of local dimension $|M|$, as required.
(Note that the number $|M|$ is fixed at the outset of the code, i.e.\ the entanglement transmission scheme must deliver a maximally entangled state with local dimension $|M|$, possibly at the expense of low fidelity, rather than outputting a variable number of certifiably high fidelity entangled pairs.)

First, the encoder prepares $n$  maximally entangled qubit states $\ket{\psi}$ and sends one half of each over the channel. The other halves, together with the untouched system $M'$, are stored in the memory register $Q$. The decoder now works as follows. The receiver determines which qubits have not been erased and informs the sender of their locations. Let $L$ be the random variable indicating the total number of erasures and note that $L$ follows a binomial distribution with parameters $n$ and $\beta$. Let us also fix $k = \big\lceil\log|M|\big\rceil$ and consider the following two cases:

\begin{enumerate}
  \item If $L= l \leq n - k$ the decoder can extract a maximally entangled state with unit fidelity. To do so, it selects $k$ perfectly transmitted entangled qubits at the sender and receiver. Let us assume (without loss of generality) that these are in a state $\ket{\phi^+}^{\otimes k} = \frac{1}{\sqrt{2^k}} \sum_{i=1}^{2^k} |ii\rangle$.
  
  The receiver then prepares a maximally entangled state of local dimension $|M|$ by measuring the $k$ qubits with the projective  measure
  \begin{align}
    \left\{ \frac{1}{{2^k-1 \choose |M|-1}} \sum_{i \in \cS} |i\rangle\!\langle i| : \cS \subseteq \big[2^k\big] \land |\cS| = |M| \right\} \,.
  \end{align}
  The outcome, a subset $\cS$ of cardinality $|M|$, is transmitted to the sender so that both sender and receiver now share a maximally entangled state on the subspace determined by $\cS$.
  
  \item  On the other hand, if $L = l > n - k$ sender and receiver simply select the successfully transmitted qubits and embed them in a space of local dimension $|M|$. The fidelity with the target state $\ket{\phi} = \frac1{\sqrt{|M|}} \sum_{i=1}^{|M|} |ii\rangle$ is given by
\begin{align}
  F\big( |\phi^+\rangle\!\langle \phi^+|^{\otimes (n-l)} , \phi \big) = \frac{1}{|M|} \sum_{i,j=1}^{|M|} \big\langle i \big| \Big( |\phi^+\rangle\!\langle \phi^+|^{\otimes (n-l)} \Big) \big| j \big\rangle = \frac{2^{n-l}}{|M|}\,.
\end{align}
\end{enumerate}

To complete the decoding operation, the sender and receiver perform quantum teleportation to teleport $M'$ to the receiver, using the maximally entangled state prepared above as a resource. The fidelity of the state prepared above with the target state $\phi_{MM'}$ is then just the expected fidelity over $L$, which evaluates to
\begin{align}
F &= \sum_{l=0}^{n-k} {n \choose l} \beta^l(1-\beta)^{n-l}+\sum_{l=n-k+1}^{n}{n \choose l}\beta^l(1-\beta)^{n-l} \,\frac{2^{n-l}}{|M|} \\
&=1-\sum_{l=n-k+1}^n {n \choose l} \beta^l (1-\beta)^{n-l}
\left(1- \frac{2^{n-l}}{|M|} \right)\,.\label{eq:innerbec}
\end{align}
This is exactly the expression reported in the aforementioned outer bound in~\cite[Thm.~38]{polyanskiy10}, meaning the inner bound coincides with the outer bound when we allow classical post-processing and communication from the receiver to the sender.


\subsection{The Qubit Depolarizing Channel}

The qubit depolarizing channel is defined as
\begin{align}
\mathcal{D}_{\alpha}: \rho \mapsto (1-\alpha)\rho + \frac{\alpha}{3}\left(X \rho X + Y \rho Y + Z \rho Z \right)\,,
\end{align}
where $\alpha\in[0,1]$ is a parameter and $X,Y,Z$ are the Pauli operators. This channel is covariant since it is a qubit Pauli channel. Using the Bell states $\phi_{AB}^+=\phi_{AA'}$, $\phi_{AB}^-=(1_{A}\otimes Z_{A'})\phi_{AA'}(1_{A}\otimes Z_{A'})$, $\psi_{AB}^+=(1_{A}\otimes X_{A'})\phi_{AA'}(1_{A}\otimes X_{A'})$, and $\psi_{AB}^-=(1_{A}\otimes Y_{A'})\phi_{AA'}(1_{A}\otimes Y_{A'})$, we immediately find 
\begin{align}
\omega_{AB}:=(\mathcal{I}_{A}\otimes\mathcal{D}_{\alpha})(\phi_{AA'})=(1-\alpha)\phi^+_{AB}+\frac{\alpha}{3}\left(\phi^-_{AB}+\psi^+_{AB}+\psi^-_{AB}\right)\,.
\end{align}
Now choosing $\sigma_{AB}=\frac12\phi^+_{AB}+\frac16(\phi^-_{AB}+\psi^+_{AB}+\psi^-_{AB})$ in~\eqref{eq:covariant_dephasing} gives the outer bound
\begin{align}
\hat{R}_{\mathcal{D}_{\alpha}}(n;\eps)\leq\hat{R}^{\mathrm{cpp}}_{\mathcal{D}_{\alpha}}(n;\eps)\leq \frac{1}{n} D_H^{\eps} \big( \omega_{AB}^{\otimes n} \big\| \sigma_{AB}^{\otimes n} \big)\,.
\end{align}
As in the case of the qubit dephasing channel, we can convert the hypothesis test between $\omega_{AB}$ and $\sigma_{AB}$ into a test between classical distributions, in fact precisely those distributions which were used in the dephasing example. This follows by considering the map which generates $\phi^+_{AB}$ when $X=Y$ and otherwise randomly generates one of the other Bell states when $X\neq Y$. Therefore, we obtain the same outer bound for the qubit depolarization channel as for the qubit dephasing channel. This raises the question of whether cpp assistance (or more generally PPT assistance, cf.~Footnote~\ref{ft:ppt}) can turn the qubit depolarizing channel into the qubit dephasing channel.


\section{Conclusion}\label{sec:conclusion}

We gave inner (achievability) and outer (converse) bounds on the boundary of the achievable region for quantum communication with finite resources. We showed that these bounds agree for the qubit dephasing channel and qubit erasure channel with classical post-processing assistance up to the second order asymptotically. Moreover, we even gave a third order characterization for these specific examples by employing finite block length bounds of the binary symmetric channel and the binary erasure channel, respectively.

However, many questions remain open. For example, we would like to understand if the inner bound in Result~\ref{res:inner} characterizes the achievable region for all degradable channels~\cite{devetakshor05} (cf.~the open questions in~\cite{tomamichelww14}). Also it would also be interesting to explore higher order refinements for channels with zero quantum capacity (e.g., for the erasure channel with $\beta\geq1/2$ and no assistance). This might lead to a better understanding of super activation of the quantum capacity~\cite{smithyard08}.

Finally, we would like to note that the recent results about finite resource entanglement assisted classical communication~\cite{datta14} can immediately be transformed to entanglement assisted quantum communication (and this then also gives outer bounds on the achievable rate region for unassisted codes). This is accomplished by using the equivalence results in~\cite[App.~B]{matthews14} which make use of quantum teleportation and superdense coding. In particular, one finds that for covariant channels $\cN$ (which includes the qubit dephasing channel and the erasure channel) the boundary of the entanglement assisted achievable region $\hat{R}^{\textnormal{E}}(n;\eps)$ satisfies
\begin{align}
     \hat{R}^{\textnormal{E}}(n;\eps) = \frac{I(\cN)}{2} + \sqrt{\frac{V^{\eps}(\cN)}{4n}} \Phi^{-1}(\eps) + O\left(\frac{\log n}{n}\right) \,,
\end{align}
with the mutual information of the channel, $I(\cN)$, and its variance, $V^{\eps}(\cN)$, as defined in~\cite{datta14}. As an example, we mention again the qubit dephasing channel $\cZ_{\gamma}$ for which
\begin{align}\label{eq:dephazing_ent}
     \hat{R}^{\textnormal{E}}(n;\eps) = 1-2h(\gamma) + \sqrt{\frac{v(\gamma)}{4n}} \Phi^{-1}(\eps) + O\left(\frac{\log n}{n}\right) \,.
\end{align}
where $h(\gamma)$ denotes the binary entropy and $v(\gamma)$ the corresponding variance as defined in Result~\ref{res:z}.

\begin{figure}
\begin{flushleft}
\hspace{1cm}
\begin{overpic}[width=0.45\textwidth]{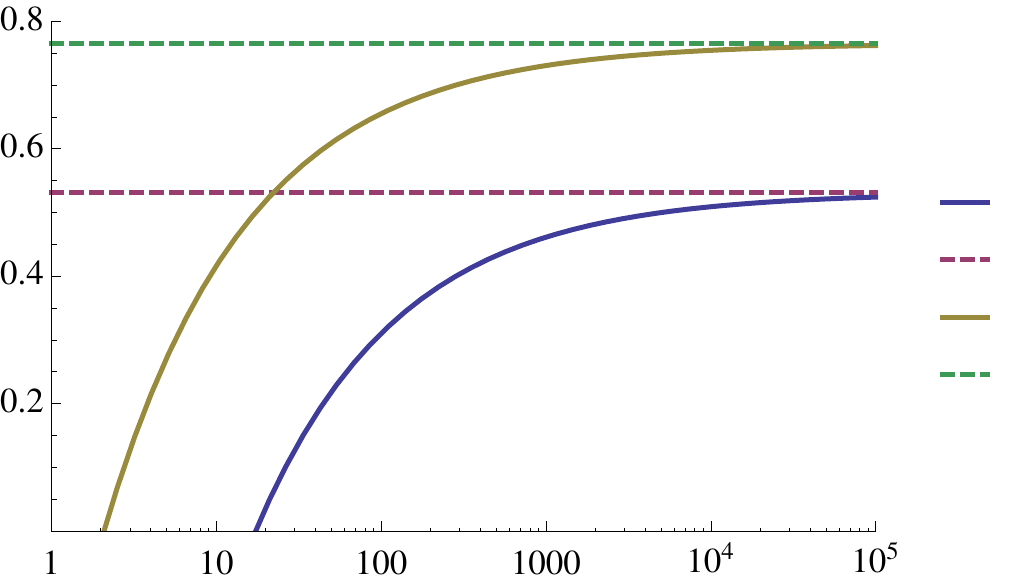}
  \put(100, 37){\footnotesize unassisted achievable region}
  \put(100, 31.5){\footnotesize unassisted capacity}
  \put(100, 26){\footnotesize entanglement-assisted achievable region}
  \put(100, 20.5){\footnotesize entanglement-assisted capacity}
  \put(30, -4){\footnotesize number of channel uses, $n$}
  \put(-5, 27){\rotatebox{90}{\footnotesize rate, $R$}}
  \put(45,27){\vector(2,-1){10}}
  \put(50,26){\footnotesize achievable region}
\end{overpic}
\end{flushleft}
 \caption{Second order approximation of the achievable region of a qubit dephasing channel with $\eps = 1\%$ and $\gamma = 0.1$ ; the achievable region is enlarged in the presence of entanglement~\cite{datta14}.}
\end{figure}


\paragraph*{Acknowledgments.} MT is funded by an University of Sydney Postdoctoral Fellowship and acknowledges support from the ARC Centre of Excellence for Engineered Quantum Systems (EQUS). 
JMR was supported by the Swiss National Science Foundation (through the National Centre of Competence in Research ‘Quantum Science and Technology’) and by the European Research Council (grant No. 258932).
We thank Chris Ferrie, Chris Granade, William Matthews, David Sutter, and Mark Wilde for helpful discussions.


\appendix


\section{Semidefinite Optimization}\label{app:sdp}

In this section we describe how to formulate the outer bound from Corollary~\ref{cor:relaxation} as a semidefinite optimization program satisfying strong duality.

A semidefinite program (SDP) is simply an optimization of a linear function of a matrix or operator over a feasible set of inputs defined by positive semidefinite constraints. We give only the bare essentials here, for more detail see \cite{boyd04,watrous-ln11}. The maximization form of an SDP is defined by a Hermiticity-preserving superoperator $\cE_{A\to B}$ taking $\cL(A)$ to $\cL(B)$, a constraint operator $C\in \cL(B)$, and an operator $K\in \cL(A)$ which defines the objective function. The SDP is the following optimization, which we will also refer to as the primal form,
\begin{align}
\begin{array}{r@{\,\,}rl}
\alpha=&\text{supremum} & \tr[KX]\\
&\text{subject to} & \cE(X)\leq C\\
&& X\geq 0\,.
\end{array}
\end{align}
When the feasible set is empty, i.e.\ no $X$ satisfy the constraints, we set $\alpha=-\infty$. The dual form arises as the optimal upper bound to the primal form, and takes the form 
\begin{align}
\begin{array}{r@{\,\,}ll}
\beta=&\text{infimum} & \tr[CY]\\
&\text{subject to} & \cE^*(Y)\geq K\\
&& Y\geq 0\,.
\end{array}
\end{align}
Again, when the set of feasible $Y$ is empty, $\beta=\infty$. Weak duality is the statement that $\alpha\leq \beta$, that indeed the dual form gives upper bounds to the primal (or that the primal lower bounds the dual). Strong duality is the statement that the optimal upper bound equals the value of the primal problem, $\alpha=\beta$. This state of affairs often holds in problems of interest, and can be established by either of the following Slater conditions. In the first, called strict primal feasibility, strong duality holds if $\beta$ is finite and there exists an $X>0$ such that $\cE(X)<C$. Contrariwise, under strict dual feasibility strong duality holds when $\alpha$ is finite and there exists a $Y>0$ such that $\cE^*(Y)>K$. 
For strongly dual SDPs we also have the so-called complementary slackness conditions $\cE^*(Y)X=KX$ and $\cE(X)Y=CY$ that relate the primal and dual optimizers.

\begin{theorem}
With the notation from Corollary~\ref{cor:relaxation}, the outer bound $f(\cN,\eps)$ can be written as
\begin{align}\label{eq:sdp_min}
\begin{array}{rcll}
f(\cN,\eps) &=& \textnormal{infimum} & \tr[\xi_A]\\
& & \textnormal{subject to} & \xi_A\otimes 1_B\geq \Lambda_{AB}+\Theta_{AB}^{T_A}\\
& & & \Lambda_{AB}\in \Gamma(\rho_A,\cN,\eps); \rho_A\in \cS(A)\\
& & & \xi_A,\Theta_{AB}\geq 0\,,
\end{array} 
\end{align}
or, equivalently,
\begin{align}\label{eq:sdp_max}
\begin{array}{rcll}
f(\cN,\eps) &=& \textnormal{supremum} & \mu(1-\eps)-\nu\\
& & \textnormal{subject to} & \mu N_{AB} \leq M_{AB}+R_{AB}\\
& & & \tr_B[R_{AB}]\leq n 1_A\\
& & & M_{AB}\in \textsc{ppt}; \mu,\nu,R_{AB}\geq 0\,.
\end{array} 
\end{align}
\end{theorem}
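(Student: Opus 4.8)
The plan is to obtain the two representations in sequence: first derive the minimization form~\eqref{eq:sdp_min} by dualizing the inner supremum in~\eqref{eq:f_function}, and then derive the maximization form~\eqref{eq:sdp_max} as the Lagrange dual of~\eqref{eq:sdp_min}, invoking Slater's condition for strong duality at both steps. First I would isolate the inner problem $g(\Lambda):=\sup_{\cM\in\mathrm{PPT}}\tr[\Lambda_{AB}M_{AB}]$ for fixed $\Lambda_{AB}$. Its feasible set consists of the Choi operators of PPT-preserving channels, characterized by $M_{AB}\geq 0$, the partial-transpose condition $M_{AB}^{T_A}\geq 0$, and the trace-preserving condition $\tr_B[M_{AB}]=1_A$, so $g(\Lambda)$ is an SDP. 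A useful preliminary step is that the trace-preserving equality may be relaxed to $\tr_B[M_{AB}]\leq 1_A$ without changing $g(\Lambda)$: since $\Lambda_{AB}\geq 0$ the objective is monotone in $M_{AB}$, and any PPT sub-channel can be completed to a PPT channel by adding a separable (hence PPT) term $(1_A-\tr_B M_{AB})\otimes\tau_B$, which only increases $\tr[\Lambda M]$. Dualizing the relaxed SDP with a multiplier $\Theta_{AB}\geq 0$ for $M_{AB}^{T_A}\geq 0$ and $\xi_A\geq 0$ for $\tr_B M_{AB}\leq 1_A$, and using that the partial transpose is self-adjoint in the trace inner product (so $\tr[\Theta M^{T_A}]=\tr[\Theta^{T_A}M]$), collapses the dual to $g(\Lambda)=\min\{\tr[\xi_A]:\xi_A\otimes 1_B\geq\Lambda_{AB}+\Theta_{AB}^{T_A},\ \xi_A,\Theta_{AB}\geq 0\}$. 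It is precisely the relaxation to the inequality constraint that makes $\xi_A$ positive semidefinite. Strong duality follows from Slater's condition via the strictly feasible point $M_{AB}=\tfrac{1}{2|B|}1_{AB}$, which satisfies $M_{AB}>0$, $M_{AB}^{T_A}>0$ and $\tr_B M_{AB}=\tfrac12 1_A<1_A$. Merging this expression with the outer infima over $\rho_A\in\cS(A)$ and $\Lambda_{AB}\in\Gamma(\rho_A,\cN,\eps)$ then gives exactly~\eqref{eq:sdp_min}.

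For the second representation I would take the Lagrange dual of~\eqref{eq:sdp_min}, with $\xi_A,\Theta_{AB},\Lambda_{AB},\rho_A$ as the cone-constrained primal variables. I would assign a dual operator $M_{AB}\geq 0$ to $\xi_A\otimes 1_B\geq\Lambda_{AB}+\Theta_{AB}^{T_A}$, an operator $R_{AB}\geq 0$ to $\Lambda_{AB}\leq\rho_A^T\otimes 1_B$, a scalar $\mu\geq 0$ to $\tr[\Lambda_{AB}N_{AB}]\geq 1-\eps$, and a free scalar $\nu$ to $\tr[\rho_A]=1$. Collecting the Lagrangian by primal variable and requiring finiteness of the unconstrained inner minimization produces the dual constraints one at a time: minimizing over $\Theta_{AB}\geq 0$ forces $M_{AB}^{T_A}\geq 0$ (so $M_{AB}$ is PPT), minimizing over $\Lambda_{AB}\geq 0$ forces $\mu N_{AB}\leq M_{AB}+R_{AB}$, minimizing over $\rho_A\geq 0$ together with the $\nu$-term forces $\tr_B[R_{AB}]\leq\nu 1_A$ (whence $\nu\geq 0$ is automatic, since $\tr_B R_{AB}\geq 0$), and minimizing over $\xi_A\geq 0$ forces the normalization $\tr_B M_{AB}\leq 1_A$ implicit in "$M\in\mathrm{PPT}$". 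The surviving constant terms give the objective $\mu(1-\eps)-\nu$, which is~\eqref{eq:sdp_max}. Strong duality here again follows from Slater: the maximization is strictly feasible, for instance with $\mu$ small, $R_{AB}>0$ and $\nu$ large enough that $\tr_B R_{AB}<\nu 1_A$, and $M_{AB}=\tfrac1{|B|}1_{AB}$ a strictly positive PPT channel Choi, making all inequalities strict.

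The main obstacle I expect is the bookkeeping of the partial transpose and of the trace-preserving normalization so that the sign and positivity of every dual variable match the stated programs; in particular, obtaining $\xi_A\geq 0$ in~\eqref{eq:sdp_min} hinges on relaxing the trace-preserving equality to the trace-non-increasing inequality, and the equivalence of the two programs then rests on verifying Slater's condition at each of the two duality steps rather than on any hard estimate. The remaining work, namely confirming weak duality $\tr[\xi_A]\geq\mu(1-\eps)-\nu$ directly by pairing the constraints, is routine.
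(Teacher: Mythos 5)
Your proposal is correct and follows essentially the same route as the paper: dualize the inner supremum over PPT Choi operators to obtain the minimization form, then dualize that program again to obtain the maximization form, invoking Slater's condition at each step (the paper builds the trace-non-increasing relaxation silently into its SDP data and certifies Slater via strict \emph{dual} feasibility at the first step, whereas you justify the relaxation explicitly and use strict primal feasibility, but these are cosmetic differences). The only item worth making explicit is the finiteness of the relevant optimal value at each Slater invocation, which the paper checks via $f_0(O_{AB})\leq|A|$ and which is equally routine in your setup.
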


\begin{proof}
The proof is straightforward: we simply use the dual of the inner optimization in~\eqref{eq:f_function} to obtain the minimization problem~\eqref{eq:sdp_min}. Then we use Slater's condition to show that strong duality holds and obtain~\eqref{eq:sdp_max}.

Consider the function
\begin{align}\label{eq:f0def}
f_0(O_{AB}):=\sup_{\cM_{A\to B}\in \textsc{PPT}}\tr[O_{AB} M_{AB}]\,,
\end{align}
and observe that $f_0$ is a semidefinite program. In particular, it is a primal problem as we have defined it, with $X=M_{AB}$, $K=O_{AB}$, $C=(0,1_A)$, and $\cE(X)=(-X^{T_A},\tr_B[X])$.
Choosing for the dual variables $Y=(\Gamma_{AB},\xi_A)$, the dual of $f_0$ is 
\begin{align}
\begin{array}{crl}
\tilde{f}_0(O_{AB}):= &\textnormal{infimum} &\tr[\xi_A]\\
& \textnormal{subject to} & \xi_A\otimes1_B\geq O_{AB}+\Gamma_{AB}^{T_A}\\
&& \Gamma_{AB},\xi_A\geq 0\,.
\end{array}
\end{align}
Combining this with the outer optimization in~\eqref{eq:f_function} gives the minimization program~\eqref{eq:sdp_min}. The equality statement is precisely strong duality of the primal and dual forms of the inner optimization. By Slater's condition, strong duality holds if $f_0$ is finite and there exists a strictly feasible set of dual variables. Observe that $f_0(O_{AB})\leq |A|$, since for the optimal $M_{AB}$ we have $f_0(O_{AB})=\tr[M_{AB}O_{AB}]\leq \tr[M_{AB}]\leq \tr_A[1_A]=|A|$. Here we have used the upper bounds $O_{AB}\leq 1_{AB}$ and $\tr_B[M_{AB}]\leq1_A$. Thus, the first condition is fulfilled. Meanwhile, $\Gamma_{AB}=1_{AB}$ and $\xi_A=3\cdot1_A$ are a strictly feasible pair. Thus, $\tilde f_0=f_0$ over the domain of interest. 

To construct the maximization program, we simply dualize the minimization program. In particular, $f(\cN,\eps)$ is a dual-form semidefinite program in the variable $Y=(\phi_A,\Lambda_{AB},\Gamma_{AB},\xi_A)$ with $C=(0,0,0,1_A)$, $K=(1-\eps,-1,0,0)$, and 
\begin{align}
\cE^*(Y)=(\tr[N_{AB}\Lambda_{AB}],-\tr[\phi_A],\phi_A^T\otimes1_B-\Lambda_{AB},\xi_A\otimes1_B-\Lambda_{AB}-\Gamma^{T_A}_{AB})\,.
\end{align}
Choosing primal variables $X=(m,n,R_{AB},M_{AB})$ leads to the maximization in~\eqref{eq:sdp_max}. Equality again follows from Slater's condition: $f$ is finite (in particular the bound on $f_0$ used above), while a feasible choice of dual variables is given by $M_{AB}=R_{AB}=\frac 1{2|B|}1_{AB}$, $n=1$, and $m=\frac 1{2|A||B|}$. The choice of $m$ ensures the first constraint holds strictly, since any Choi operator of a trace-preserving map satisfies $\| N_{AB}\|_\infty=|A|$ (largest singular value).
\end{proof}

No discussion of strong duality of semidefinite programs is complete until the complementary slackness conditions have been formulated. Often, these give considerable insight into the form and properties of the optimizing variables. First observe that 
\begin{align}
\cE(X)=(-n1_A+\tr_B[R_{AB}^{T_A}]
,\,mN_{AB}-M_{AB}-R_{AB}
,\,-M_{AB}^{T_A},\,\tr_B[M_{AB}])\,.
\end{align}
Then the conditions are easy to read off from the form of $C$ and $K$. They are 
\begin{align}
\tr[\phi_A]&=1\\
\tr[\Lambda_{AB}N_{AB}]&=1-\eps\\
\phi_A^T R_{AB}&=\Lambda_{AB}R_{AB}\\
\xi_AM_{AB}&=(\Lambda_{AB}+\Gamma^{T_A}_{AB})M_{AB}\\
n\phi_A&=\tr_B[R_{AB}^{T_A}
]\phi_A\\
M_{AB}^{T_A}\Gamma_{AB}&=0\\
\tr_B[M_{AB}]\xi_A&=\xi_A\\
m N_{AB}\Lambda_{AB}&=(M_{AB}+R_{AB}
)\Lambda_{AB}\,.
\end{align}

\bibliographystyle{arxiv_no_month}
\bibliography{library,jmrbib}

\end{document}